\def\mSigma{\mbox{$\mathbf{\Sigma} \kern .08em$}}
\def\mLambda{\mbox{$\mathbf{\Lambda} \kern .08em$}}
\DeclareMathOperator{\rank}{\operatorname{rank}}
\DeclareMathOperator{\diag}{\mathrm{diag} \, }
\newcommand{\oline}[1]{\mkern 1.5mu\overline{\mkern-1.5mu#1\mkern-1.5mu}\mkern 1.5mu}
\newcommand{\E}{{\cal E}}
\newcommand{\G}{{\cal G}}
\newcommand{\V}{{\cal V}}
\newcommand{\abs}[1]{\left\vert#1\right\vert}
\def\S{\text{\mbox{${\cal S}$}}}
\def\F{\text{\mbox{${\cal F}$}}}
\def\V{\text{\mbox{${\cal V}$}}}
\def\E{\text{\mbox{${\cal E}$}}}
\def\G{\text{\mbox{${\cal G}$}}}
\def\B{\text{\mbox{${\cal B}$}}}
\def\D{\text{\mbox{${\cal D}$}}}
\def\b0{\text{\mbox{\boldmath $0$}}}
\def\ba{\text{\mbox{\boldmath $a$}}}
\def\bd{\text{\mbox{\boldmath $d$}}}
\def\bp{\text{\mbox{\boldmath $p$}}}
\def\bs{\text{\mbox{\boldmath $s$}}}
\def\bu{\text{\mbox{\boldmath $u$}}}
\def\bx{\text{\mbox{\boldmath $x$}}}
\def\by{\text{\mbox{\boldmath $y$}}}
\def\bz{\text{\mbox{\boldmath $z$}}}
\def\buno{\text{\mbox{\boldmath $1$}}}
\def\bv{\text{\mbox{\boldmath $v$}}}
\def\mA{\mbox{$\mathbf{A}$}}
\def\mB{\mbox{$\mathbf{B}$}}
\def\mC{\mbox{$\mathbf{C}$}}
\def\mD{\mbox{$\mathbf{D}$}}
\def\mI{\mbox{$\mathbf{I}$}}
\def\mL{\mbox{$\mathbf{L}$}}
\def\mP{\mbox{$\mathbf{P}$}}
\def\mQ{\mbox{$\mathbf{Q}$}}
\def\mR{\mbox{$\mathbf{R}$}}
\def\mS{\mbox{$\mathbf{S}$}}
\def\mU{\mbox{$\mathbf{U}$}}
\begin{document}

\chapter[Chapter Title]{Sampling and Recovery \newline of Graph Signals}\label{chap1}

\author*[1]{Paolo Di Lorenzo}%
\author[2]{Sergio Barbarossa,}%
\author[1]{Paolo Banelli}%

\address[1]{\orgname{University of Perugia}, \orgdiv{Department of Engineering}, \orgaddress{Via G. Duranti 93, 06125, Perugia, Italy.}}
\address[2]{\orgname{Sapienza University of Rome}, \orgdiv{Department of Information Engineering, Electronics, and Telecommunications}, \orgaddress{via Eudossiana 18, 00184, Rome, Italy.}}
\address*[3]{Corresponding: \email{paolo.dilorenzo@unipg.it}}

\makechaptertitle

\begin{abstract}[Abstract]
The aim of this chapter is to give an overview of the recent advances related to sampling and recovery of signals defined over graphs. First, we illustrate the conditions for perfect recovery of bandlimited graph signals from samples collected over a selected set of vertexes. Then, we describe some sampling design criteria proposed in the literature to mitigate the effect of noise and model mismatching when performing graph signal recovery. Finally, we illustrate algorithms and optimal sampling strategies for adaptive recovery and tracking of dynamic graph signals, where both sampling set and signal values are allowed to vary with time. Numerical simulations carried out over both synthetic and real data illustrate the potential advantages of graph signal processing methods for sampling, interpolation, and tracking of signals observed over irregular domains such as, e.g., technological or biological networks.
\end{abstract}

\begin{keywords}[Keywords:]
Graph signal processing \sep sampling on graphs \sep interpolation on graphs \sep adaptation and learning over networks
\end{keywords}

\section{Introduction}
\label{intro}

In a large number of applications involving sensor, transportation, communication, social, or biological networks, the observed data can be modeled as signals defined over graphs, or graph signals for short. As a consequence, over the last few years, there was a surge of interest in developing novel analysis methods for graph signals, thus leading to the research field known as graph signal processing (GSP), see, e.g., \cite{shuman2013emerging,sandryhaila2013discrete}. The goal of GSP is to extend classical processing tools to the analysis of signals defined over an irregular discrete domain, represented by a graph, and one interesting aspect is that such methods typically come to depend on the graph topology, see, e.g., \cite{sandryhaila2013discrete,sandryhaila2014discrete,chen2015signal,zhu2012approximating,chen2015discrete}.

A fundamental task in GSP is to infer the values of a graph signal by interpolating the samples collected from a known set of vertices. In the GSP literature, this learning task is known as \textit{interpolation from samples}, and emerges whenever cost constraints limit the number of vertices that we can directly observe. This arises in several applications such as semi-supervised learning of categorical data \cite{gadde2014active}, environmental monitoring \cite{janssen2008spatial}, and missing value prediction as in matrix completion problems \cite{candes2009exact}.
Interpolation methods on graphs rely on the implicit assumption that nodes close to each other have similar values, i.e., the graph encodes similarity
among the values observed over the vertices. For instance, in an item-item graph in a recommendation system, a user would rate two similar items with similar ratings \cite{gomez2016netflix}. In the same way, predicting the functions of proteins based on a protein network relies on some notion of closeness among the nodes \cite{yamanishi2004protein}. In other words, the signals of interest must be smooth functions over the graph. In GSP, the smoothness assumption is typically formalized in terms of (approximate) bandlimitedness over a graph Fourier basis, and enables recovery of the signal after sampling over a selected subset of vertices.

A first seminal contribution to sampling theory in GSP is given by \cite{pesenson2008sampling}, where a sufficient condition for unique recovery is stated for a given sampling set; the approach was then extended in \cite{narang2013signal,anis2014towards}. Most of the works on graph sampling theory assume that a portion of the graph Fourier basis is explicitly known. For example, the work in \cite{chen2015discrete} provides conditions that guarantee unique reconstruction of signals spanned over a subset of vectors composing the graph Fourier basis, proposing also a greedy method to select the sampling set in order to minimize the effect of sample noise in the worst-case. Reference \cite{wang2014local} exploited a smart partitioning of the graph in local-sets, and proposed iterative methods to reconstruct bandlimited graph signal from sampled data. The work in \cite{tsitsvero2015signals} creates a conceptual link between uncertainty principle and sampling of graph signals, and proposes several optimality criteria (e.g., the mean-square error) to select the sampling set in the presence of noise. Another valid approach is the so-called aggregation sampling \cite{marques2016sampling}, which involves successively shifting a signal using the adjacency matrix and aggregating the values at a given node. Greedy sampling strategies with provable performance guarantees were proposed in \cite{chamon2017greedy} in a Bayesian reconstruction setting.

If the size of the graph signal is very large as, e.g., in web-scale graphs \cite{tremblay2016compressive}, complexity becomes a crucial issue, such that in many cases we cannot assume to know or efficiently compute the graph Fourier basis. Some works have then proposed sampling methods that
do not require such previous knowledge. For instance, the work in \cite{anis2016efficient} proposes efficient methods to select the sampling set based on powers of the variation operator to approximate the bandwidth of the graph signal. There are also alternative approaches that do not consider graph spectral information and rely only on vertex-domain characteristic, e.g., maximum graph cuts \cite{narang2010local} and spanning trees \cite{nguyen2015downsampling}. Finally, there exist randomized sampling strategies, e.g., \cite{chen2016signal,puy2016random,Tre17}. The work in \cite{chen2016signal} provides an efficient design of sampling probability distribution over the nodes, deriving bounds on the reconstruction error in the presence of noise and/or approximatively bandlimited signals. Reference \cite{puy2016random} exploits compressive sampling arguments to derive random sampling strategies with variable density, thus also proposing a fast technique to estimate the optimal sampling distribution accurately. Last, the work in \cite{Tre17} proposes a sampling strategy tailored for large-scale data based on random walks on graphs.

The sampling strategies described so far involve batch methods for sampling and recovery of graph signals. In many applications such as, e.g., transportation networks, brain networks, or communication networks, the observed graph signals are typically time-varying. This requires the development of effective methods capable to learn and track dynamic graph signals from a carefully designed, possibly time-varying, sampling set. Some previous works have considered this specific learning task, see, e.g., \cite{dilo2016adaptive,di2016distAdaGraph,romero2016kernel,wang2015distributed}. Specifically, \cite{dilo2016adaptive} proposed an LMS estimation strategy enabling adaptive learning and tracking from a limited number of smartly sampled observations. The LMS method in  \cite{dilo2016adaptive} was then extended to the distributed setting in \cite{di2016distAdaGraph}. The work in \cite{romero2016kernel} proposed a kernel-based reconstruction framework to accommodate time-evolving signals over possibly time-evolving topologies, leveraging spatio-temporal dynamics of the observed data. Finally, reference \cite{wang2015distributed} proposes a distributed method for tracking
bandlimited graph signals, assuming perfect observations and a fixed sampling strategy.

In this chapter, we review some of the recent advances related to sampling and recovery of signals defined over graphs. Due to space limitations, such review will be limited only to some specific contributions.
The structure of the chapter is explained in the sequel. Sec. \ref{Sec:Not_Back} defines the adopted notation, and recalls some background on GSP. In Sec. \ref{Sec:Batch_samp}, we illustrate the conditions for perfect recovery of bandlimited graph signals from samples collected according to design criteria proposed to mitigate the effect of noise or model mismatching. Finally, Sec. \ref{Sec:Adaptive_sampl_rec} illustrates algorithms and optimal sampling strategies for adaptive recovery and tracking of dynamic graph signals, where both sampling set and signal values are allowed to vary with time.

\section{Notation and Background}
\label{Sec:Not_Back}

In this paragraph, we first introduce the notation that we will use throughout the chapter. Then, we briefly recall some basic notions from GSP that will be instrumental for the derivations and arguments of the following sections.\\

\noindent \textbf{Notation.} We  indicate  scalars  by  normal  letters (e.g., $a$); vector variables with bold lowercase letters (e.g., $\ba$) and matrix variables  with  bold  uppercase  letters (e.g., $\mA$). Scalars $a_i$ and $a_{ij}$ correspond to the $i$-th entry of $\ba$ and the $ij$-th entry of $\mA$, respectively. We indicate by $\|\ba\|_2$ and $\|\mA\|_2$ the  $\ell_2$ norm  and  the  spectral  norm  of  the  vector $\ba$ and  matrix $\mA$, respectively. If $\mA$ is rectangular, we denote by $\sigma_i(\mA)$ the $i$-th singular value of $\mA$; if $\mA$ is square, $\lambda_i(\mA)$ represents the $i$-th eigenvalue of $\mA$. The trace operator of matrix $\mA$ is indicated with ${\rm Tr}(\mA)$; ${\rm diag}(\ba)$ is a diagonal matrix having $\ba$ as main diagonal; ${\rm rank}(\mA)$ denotes the rank of matrix $\mA$; $\det(\mA)$ represents the determinant of $\mA$, whereas ${\rm pdet}(\mA)$ is the pseudo-determinant of $\mA$, i.e., the product of all non-zero eigenvalues of $\mA$. The superscript $^H$ denotes the hermitian operator, i.e., the conjugate transposition of a vector or a matrix, whereas $\mA^\dagger$ denotes the pseudo-inverse of matrix $\mA$. $\mathbb{E}\{\cdot\}$ represents the expectation operator. A set of elements is denoted by a calligraphic letter (e.g., $\S$), and $|\S|$ represents the cardinality of set $\S$, i.e., the number of elements of $\S$. The symbols $\cup$, $\cap$, and $\setminus$ denote union, intersection, and difference among sets, respectively. Given a set $\S$, we denote its complement set as $\S^c$, i.e., $\V=\S \cup \S^c$ and $\S \cap S^c=\emptyset$. $\mathbf{1}$ denotes the vector of all ones, whereas $\buno_{\S}$ is the set indicator vector, whose $i$-th entry is equal to one, if  $i \in \S$, or zero otherwise. \\

\noindent \textbf{Background.} We consider a graph $\G = (\V, \E)$ consisting of a set of $N$ nodes $\V = \{1,2,..., N\}$, along with a set of weighted edges $\E=\{a_{ij}\}_{i, j \in \V}$, such that $a_{ij}>0$, if there is a link from node $j$ to node $i$, or $a_{ij}=0$, otherwise. The adjacency matrix $\mA$ of a graph is the collection of all the weights $a_{ij}, i, j = 1, \ldots, N$. The combinatorial Laplacian matrix is defined as $\mathbf{L} = {\rm diag}(\mathbf{1}^T\mA)-\mathbf{A}$. A signal $\bx$ over a graph $\G$ is defined as a mapping from the vertex set to the set of complex numbers, i.e., $\bx: \V \rightarrow \mathbb{C}$. The graph $\G$ is endowed with a graph-shift operator $\mS$ defined as an $N \times N$ matrix whose entry $(i,j)$, denoted with $S_{ij}$, can be non-zero only if $i=j$ or the link $(j,i)\in \E$. The sparsity pattern of matrix $\mS$ captures the local structure of $\G$; common choices for $\mS$ are the adjacency matrix \cite{sandryhaila2013discrete}, the Laplacian \cite{shuman2013emerging}, and its generalizations \cite{anis2016efficient}. We assume that $\mS$ is diagonalizable, i.e., there exists an $N \times N$ matrix $\mU=[\bu_1,\ldots,\bu_N]$ and an $N \times N$ diagonal matrix $\boldsymbol{\Lambda}$ that can be used to decompose $\mS$ as $\mS=\mU\boldsymbol{\Lambda}\mU^{-1}$. When $\mS$ is normal, i.e., when $\mS\mS^H=\mS^H\mS$, matrix $\mU$ is unitary and $\mU^{-1}=\mU^H$.

Recovery of a signal from its sampled version is possible under the assumption that $\bx$ admits a \textit{sparse representation}. The basic idea when addressing the problem of sampling graph signals is to suppose that $\mS$ plays a key role in explaining the signal of interest. More specifically, we assume that $\bx$ can be expressed as a linear combination of a subset of the columns of $\mU$, i.e.,
\begin{equation}
\label{x=Us}
\bx=\mU \bs
\end{equation}
where $\bs\in \mathbb{C}^N$ is either exactly or approximately sparse. In this context, vectors $\{\bu_i\}_{i=1}^N$ are interpreted as the graph Fourier basis, and $\{s_i\}_{i=1}^N$
are the corresponding graph signal frequency coefficients, i.e.,
\begin{equation}\label{GFT}
\bs = \mathbf{U}^H \bx
\end{equation}
takes the role of the Graph Fourier Transform (GFT) of signal $\bx$. As an example, in many cases the graph signal exhibits clustering features, i.e., it is a smooth function over each cluster (e.g., in semi-supervised learning \cite{gadde2014active}), but it may vary arbitrarily from one cluster to the other. In such a case, if the columns of $\mU$ are chosen to represent clusters, the only nonzero (or approximately nonzero) entries of $\bs$ are the ones associated to the clusters. In the case of {\it undirected} graphs, $\mU$ may be composed from the eigenvectors of the Laplacian, which have well known clustering properties \cite{godsil2013algebraic}.

The localization properties of graph signals in vertex and frequency domains will play an important role in the ensuing arguments. To introduce such properties, we first define the matrix $\mU_{\F}\in \mathbb{C}^{N\times|\F|}$, which represents the collection of all the columns of $\mU$ associated with a subset of frequency indices $\F \subseteq \{1,\ldots,N\}$. Then, we introduce the $N\times N$ band-limiting operator
\begin{equation}
\label{lowpass_operator}
\mB_{\F} = \mU_{\F} \mU^H_{\F}.
\end{equation}
The role of $\mB_{\F}$ is to project a vector $\bx$ onto the subspace spanned by the columns of $\mU_{\F}$. Thus, we say that a vector $\bx$ is perfectly localized over the frequency set $\F$ (or $\F$-bandlimited) if
\begin{equation}
\label{Bx=x}
\mB_{\F} \bx=\bx=\mU_{\F}\bs_{\F},
\end{equation}
where $\mB_{\F}$ is given in (\ref{lowpass_operator}), and the second equality comes from (\ref{x=Us}) where we have exploited the sparsity of $\bs$, which is different from zero (and equal to $\bs_{\F}\in \mathbb{C}^{|\F|}$) only in the frequency support $\F$.
Similarly, given a subset of vertices $\S \subseteq \V$, we define the $N\times N$ vertex-limiting operator
\begin{equation}
\label{D}
\mathbf{D}_{\S} = {\rm diag}\{\buno_{\S}\}.
\end{equation}
Thus, we say that a vector $\bx$ is perfectly localized over the subset $\S \subseteq \V$ (or $\S$-vertex-limited) if
$\mD_{\S} \bx=\bx$, with $\mD_{\S}$ defined as in (\ref{D}).
We also denote by $\B_{\F}$ the set of all $\F$-bandlimited signals, and by $\D_{\S}$ the set of all $\S$-vertex-limited signals. The operators $\mathbf{D}_{\S}$ and $\mathbf{B}_{\F}$ are self-adjoint and idempotent, and represent orthogonal projectors onto the sets $\D_{\S}$ and $\B_{\F}$, respectively. Differently from continuous-time signals, a graph signal can be perfectly localized in {\it both} vertex and frequency domains. This property is formally stated in the following theorem \cite[Th. 2.1]{tsitsvero2015signals}.
\begin{theorem}
\label{theorem_unit_eigenvalue}
There is a graph signal $\bx$ perfectly localized over both vertex set $\S$ and frequency set $\F$ (i.e. $\bx \in \B_{\F} \cap \D_{\S}$) if and only if the operator $\mB_{\F}  \mD_{\S}  \mB_{\F}$ has an eigenvalue equal to one; in such a case, $\bx$ is the eigenvector of $\mB_{\F}  \mD_{\S}  \mB_{\F}$ associated to the unit eigenvalue.
\end{theorem}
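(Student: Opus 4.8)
The plan is to lean on the structural facts, recalled immediately before the statement, that $\mB_{\F}$ and $\mD_{\S}$ are orthogonal projectors: each is self-adjoint, idempotent, and has spectral norm at most one. As a preliminary remark, $\mB_{\F}\mD_{\S}\mB_{\F}$ is self-adjoint and positive semidefinite, since $\langle \mB_{\F}\mD_{\S}\mB_{\F}\bx,\bx\rangle=\|\mD_{\S}\mB_{\F}\bx\|_2^2\ge 0$, and submultiplicativity of the spectral norm gives $\|\mB_{\F}\mD_{\S}\mB_{\F}\|_2\le \|\mB_{\F}\|_2\,\|\mD_{\S}\|_2\,\|\mB_{\F}\|_2\le 1$; hence all eigenvalues of $\mB_{\F}\mD_{\S}\mB_{\F}$ lie in $[0,1]$, so having an eigenvalue equal to one means attaining the largest possible value.

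For the ``only if'' direction I would argue directly. If $\b0\neq\bx\in\B_{\F}\cap\D_{\S}$, then by the definitions in (\ref{Bx=x}) and (\ref{D}) we have $\mB_{\F}\bx=\bx$ and $\mD_{\S}\bx=\bx$; substituting,
\begin{equation*}
\mB_{\F}\mD_{\S}\mB_{\F}\bx=\mB_{\F}\mD_{\S}\bx=\mB_{\F}\bx=\bx,
\end{equation*}
so $\bx$ is an eigenvector of $\mB_{\F}\mD_{\S}\mB_{\F}$ associated with the unit eigenvalue.

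The ``if'' direction carries the real content. Suppose $\mB_{\F}\mD_{\S}\mB_{\F}\bx=\bx$ with $\bx\neq\b0$. Applying $\mB_{\F}$ to both sides and using $\mB_{\F}^2=\mB_{\F}$ yields $\bx=\mB_{\F}\mD_{\S}\mB_{\F}\bx=\mB_{\F}\bx$, so $\bx\in\B_{\F}$; the eigen-relation then reads $\mB_{\F}\mD_{\S}\bx=\bx$. Taking $\ell_2$ norms and using $\|\mB_{\F}\|_2\le 1$, $\|\mD_{\S}\|_2\le 1$ gives the chain
\begin{equation*}
\|\bx\|_2=\|\mB_{\F}\mD_{\S}\bx\|_2\le\|\mD_{\S}\bx\|_2\le\|\bx\|_2,
\end{equation*}
forcing equalities throughout; in particular $\|\mD_{\S}\bx\|_2=\|\bx\|_2$. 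Since $\mD_{\S}$ is an orthogonal projector, the Pythagorean identity $\|\mD_{\S}\bx\|_2^2+\|(\mI-\mD_{\S})\bx\|_2^2=\|\bx\|_2^2$ then forces $(\mI-\mD_{\S})\bx=\b0$, i.e.\ $\mD_{\S}\bx=\bx$ and $\bx\in\D_{\S}$. Hence $\bx\in\B_{\F}\cap\D_{\S}$, and together with the ``only if'' computation this shows that the nonzero signals localized in both domains are exactly the unit eigenvectors of $\mB_{\F}\mD_{\S}\mB_{\F}$.

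I expect the step $\|\mD_{\S}\bx\|_2=\|\bx\|_2\Rightarrow\mD_{\S}\bx=\bx$ to be the crux: it is the only place where one really uses that $\mD_{\S}$ is an \emph{orthogonal} projector rather than merely idempotent, and it is what promotes an equality of norms into an operator identity. Everything else reduces to bookkeeping with idempotency and the submultiplicativity of the spectral norm.
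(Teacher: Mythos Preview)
Your proof is correct. Note, however, that the paper does not actually prove this theorem: it is quoted from \cite{tsitsvero2015signals} (their Theorem~2.1) without argument, so there is no ``paper's own proof'' to compare against. Your approach---using idempotency to get $\bx\in\B_{\F}$, then a norm-squeeze plus the Pythagorean identity for the orthogonal projector $\mD_{\S}$ to force $\bx\in\D_{\S}$---is the standard and clean way to establish this type of result, and all steps are valid. Your remark that the implication $\|\mD_{\S}\bx\|_2=\|\bx\|_2\Rightarrow\mD_{\S}\bx=\bx$ is exactly where orthogonality of the projector is essential is also on point.
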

Perfect localization onto the sets $\S$ and $\F$ can be equivalently expressed in terms of the operator $\mD_{\S} \mU_{\F}$ \cite{tsitsvero2015signals}, i.e., it holds if and only if
\begin{equation}
\label{|BD|=1=|DB|}
\|\mD_{\S} \mU_{\F} \|_2=\|\mB_{\F}\mD_{\S}\mB_{\F}\|_2  = 1.
\end{equation}

In the following section, we will illustrate the theory behind sampling and recovery of signals defined over graphs.

\section{Sampling and Recovery}\label{Sec:Batch_samp}

Let us consider the observation of an $\F$-bandlimited graph signal over the sampling set $\S$. The observation model can be cast as:
\begin{equation}
\label{eq::sampling equation}
\by_{\S}=\mathbf{P}_{\S}^T\, \bx=\mathbf{P}_{\S}^T\mathbf{U}_{\F}\bs_{\F},
\end{equation}
where $\by_{\S}\in \mathbb{C}^{|\S|}$ is the observation vector over the vertex set $\S$, and $\mathbf{P}_{\S}\in \mathbb{R}^{N\times |\S|}$ is a sampling matrix whose columns are indicator functions for nodes in $\S$, and such that the orthogonal projector over $\D_{\S}$ is given by $\mD_{\S}=\mP_{\S}\mP_{\S}^T$ [cf. (\ref{D})]. The problem of recovering a bandlimited graph signal from its samples is then equivalent to the problem of properly selecting the sampling set $\S$, and then recover $\bx$ from $\by_{\S}$ by inverting the system of equations in (\ref{eq::sampling equation}). This approach is known as \textit{selection sampling} and was addressed, for example, in  \cite{pesenson2008sampling}, \cite{narang2013signal}, \cite{chen2015discrete}, and \cite{tsitsvero2015signals}.

In the sequel, we will first consider the conditions for perfect recovery of bandlimited graph signals. Then, we will illustrate the effect of noise and model mismatching on the reconstruction performance. Also, since the identification of the sampling set $\S$ plays a key role in the conditions for signal recovery and in the reconstruction performance, we will illustrate optimization strategies to design the sampling set. Finally, we will illustrates results of numerical simulations carried out over synthetic and realistic data.

\subsection{Sampling and Perfect Recovery of bandlimited graph signals}
\label{Perf_rec}

We will now address the fundamental problem of assessing the conditions and the means for perfect recovery of $\bx$ from $\by_{\S}$. To this aim, we introduce the operator $\mD_{\S^c}=\mI-\mD_{\S}$, which projects onto the complement vertex set $\S^c = \V \setminus \S$. Starting from (\ref{eq::sampling equation}), the necessary and sufficient conditions for perfect recovery are stated in the following Theorem \cite[Th. 4.1]{tsitsvero2015signals}.
\begin{theorem}
\label{theorem::sampling theorem}
Any $\F$-bandlimited graph signal $\bx$ can be perfectly recovered from its samples collected over the vertex set $\S$ if and only if
\begin{equation}
\label{|DcB|<1}
\| \mD_{\S^c} \mU_{\F} \|_2 < 1,
\end{equation}
i.e., if there are no $\F$-bandlimited signals that are perfectly localized on $\S^c$.
\end{theorem}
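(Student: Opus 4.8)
The plan is to treat this as a corollary of Theorem~\ref{theorem_unit_eigenvalue} applied to the complement vertex set $\S^c$, once ``perfect recovery'' has been rephrased as an injectivity condition.

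First I would note that, since $\mU$ is invertible, $\mU_{\F}$ has full column rank, so $\bs_{\F}\mapsto\bx=\mU_{\F}\bs_{\F}$ is a bijection of $\mathbb{C}^{|\F|}$ onto $\B_{\F}$. Hence every $\F$-bandlimited $\bx$ is determined by its samples $\by_{\S}=\mP_{\S}^T\mU_{\F}\bs_{\F}$ [cf.\ (\ref{eq::sampling equation})] if and only if the linear map $\bs_{\F}\mapsto\mP_{\S}^T\mU_{\F}\bs_{\F}$ is injective, i.e.\ $\ker(\mP_{\S}^T\mU_{\F})=\{\mathbf{0}\}$ (no nonlinear decoder can do better here, since two $\F$-bandlimited signals with equal samples are indistinguishable). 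Using $\mD_{\S}=\mP_{\S}\mP_{\S}^T$ together with $\mP_{\S}^T\mP_{\S}=\mI_{|\S|}$, a one-line computation shows $\mP_{\S}^T\bv=\mathbf{0}\iff\mD_{\S}\bv=\mathbf{0}$. Therefore recovery fails exactly when there is some $\bs_{\F}\neq\mathbf{0}$ with $\mD_{\S}\mU_{\F}\bs_{\F}=\mathbf{0}$, equivalently $\mD_{\S^c}\mU_{\F}\bs_{\F}=\mU_{\F}\bs_{\F}$; that is, the nonzero $\F$-bandlimited signal $\mU_{\F}\bs_{\F}$ is also $\S^c$-vertex-limited. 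In other words: perfect recovery holds $\iff$ $\B_{\F}\cap\D_{\S^c}=\{\mathbf{0}\}$, which is precisely the ``i.e.''\ clause in the statement.

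Next I would apply Theorem~\ref{theorem_unit_eigenvalue} with $\S$ replaced by $\S^c$: a nonzero signal in $\B_{\F}\cap\D_{\S^c}$ exists iff $\mB_{\F}\mD_{\S^c}\mB_{\F}$ has an eigenvalue equal to one, and by the equivalent norm characterization (\ref{|BD|=1=|DB|}) this happens iff $\|\mD_{\S^c}\mU_{\F}\|_2=1$. Since $\mD_{\S^c}$ is an orthogonal projector and $\mU_{\F}$ has orthonormal columns, one always has $\|\mD_{\S^c}\mU_{\F}\|_2\le 1$, so $\|\mD_{\S^c}\mU_{\F}\|_2\neq1$ is the same as $\|\mD_{\S^c}\mU_{\F}\|_2<1$. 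Chaining the equivalences gives exactly (\ref{|DcB|<1}).

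For completeness I would also record the reconstruction when (\ref{|DcB|<1}) holds: then $\mU_{\F}^H\mD_{\S^c}\mU_{\F}$ has spectral norm below one, so $\mU_{\F}^H\mD_{\S}\mU_{\F}=\mI_{|\F|}-\mU_{\F}^H\mD_{\S^c}\mU_{\F}$ is positive definite, hence invertible, and $\bs_{\F}=(\mU_{\F}^H\mD_{\S}\mU_{\F})^{-1}\mU_{\F}^H\mP_{\S}\by_{\S}$, $\bx=\mU_{\F}\bs_{\F}$. I do not expect a genuine obstacle here; the only delicate point — the ``hard part'' — is the bookkeeping that turns $\ker(\mP_{\S}^T\mU_{\F})=\{\mathbf{0}\}$ into the subspace-intersection statement and applies the earlier localization theorem to $\S^c$ in place of $\S$, after which the claim is immediate.
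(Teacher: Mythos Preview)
Your proposal is correct and takes essentially the same approach as the paper. The paper argues sufficiency directly by the invertibility of $\mI-\mU_{\F}^H\mD_{\S^c}\mU_{\F}$ when $\|\mD_{\S^c}\mU_{\F}\|_2<1$ and necessity via the localization characterization (\ref{|BD|=1=|DB|}), while you package the same two ingredients as a single chain of equivalences through Theorem~\ref{theorem_unit_eigenvalue}; the substance and key steps are identical.
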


\begin{proof}
From (\ref{eq::sampling equation}), a sufficient condition for signal recovery is the existence of the pseudo-inverse matrix $\mQ=(\mP_{\S}^T\mU_{\F})^\dagger=(\mU_{\F}^H\mD_{\S}\mU_{\F})^{-1}\mU_{\F}^H\mP_{\S}$.
Since  $\mU_{\F}^H\mD_{\S}\mU_{\F}=\mI-\mU_{\F}^H\mD_{\S^c} \mU_{\F}$, we obtain that $\mQ$ exists if
$\|\mU_{\F}^H\mD_{\S^c} \mU_{\F}\|_2=\|\mD_{\S^c} \mU_{\F}\|_2<1$, i.e., if (\ref{|DcB|<1}) holds true. Conversely, if $\|\mD_{\S^c}\mU_{\F}\|_2=1$, there exist bandlimited signals that are perfectly localized over $\S^c$ [cf. (\ref{|BD|=1=|DB|})]. Thus, if we sample one of such signals over $\S$, it would be impossible to recover $\bx$ from those samples. This proves that condition (\ref{|DcB|<1}) is also necessary.
\end{proof}




\noindent Theorem \ref{theorem::sampling theorem} and its proof also suggest the reconstruction formula:
\begin{equation}
\label{eq::rec2}
\widehat{\bx}=\mU_{\F}(\mP_{\S}^T\mU_{\F})^\dagger\by_{\S}=\mU_{\F}(\mU_{\F}^H\mD_{\S}\mU_{\F})^{-1}\mU_{\F}^H\mP_{\S}\by_{\S},
\end{equation}
which guarantees reconstruction of the bandlimited graph signal $\bx$ if condition (\ref{|DcB|<1}) holds true, and has computational complexity equal to $O(N|\F|^2)$. The above reconstruction formula is also known as consistent reconstruction \cite{eldar2003sampling} since it keeps the observed samples unchanged.

Let us consider now the implications of condition (\ref{|DcB|<1}) of Theorem \ref{theorem::sampling theorem} on the sampling strategy. To fulfill (\ref{|DcB|<1}), we need to guarantee that there exist no signals that are perfectly localized over the vertex set $\S^c$ and the frequency set $\F$. Since, in general, we have
\begin{equation}
\by_{\S}= \mP_{\S}^T\mU_{\F} \bs_{\F}+\mP_{\S^c}^T \mU_{\F} \bs_{\F},
\end{equation}
we need to guarantee that $\mP_{\S}\mU_{\F} \bs_{\F} \neq\b0$ for any non-trivial vector $\bs_{\F}$,
which requires $\mP_{\S}^T\mU_{\F}$ to be full column rank, i.e.,
\begin{equation}
\label{db_equal_b}
\rank (\mP_{\S}^T \mU_{\F}) = \rank (\mU_{\F}) = |\F|.
\end{equation}
Of course, a necessary condition to satisfy (\ref{db_equal_b}) is that
\begin{equation}\label{S>=F}
|\S| \ge |\F|.
\end{equation}
However, condition (\ref{S>=F}) is not sufficient, because $\mP_{\S}^T\mU_{\F}$ may loose rank, depending on graph topology and samples' location. As a particular case, if the graph is not connected, the vertices can be labeled so that the Laplacian (adjacency) matrix can be written as a block diagonal matrix, with a number of blocks equal to the number of connected components. Correspondingly, each eigenvector of $\mL$ can be expressed as a vector having all zero elements, except for the entries corresponding to the connected component. This implies that, if there are no samples over the vertices corresponding to the non-null entries of the eigenvectors with index included in $\F$, $\mP_{\S}^T\mU_{\F}$ looses rank.
More generally, even if the graph is connected, there may easily occur situations where matrix $\mP_{\S}^T\mU_{\F}$ is not rank-deficient, but it is ill-conditioned, depending on graph topology and samples' location. This case is particularly dangerous when the true signal is only approximately bandlimited (which is the case for most signals in practice) or when the samples are noisy. In such cases, not all sampling sets of given size are equally good, and it becomes fundamental to understand which is the best sampling set that achieves the smallest reconstruction error. Thus, in the sequel, we will first illustrate the effect of noise and model mismatching on graph signal reconstruction, and then we will describe sampling strategies satisfying several optimization criteria.

\subsection{The effect of noise and model mismatching}
\label{Sec:noise_aliasing}
Let us consider first the reconstruction of bandlimited signals from noisy samples, where the observation model is given by:
\begin{equation}
\label{r=D(s+n)}
\by_{\S} = \mP_{\S}^T \left( \bx + \bv \right) = \mP_{\S}^T \mU_{\F}\bs_{\F} + \mP_{\S}^T\bv,
\end{equation}
where $\bv$ is a zero-mean noise vector with covariance matrix $\mR_v=\mathbb{E}\{\bv\bv^H\}$. To design an interpolator in the presence of noise, we consider the best linear unbiased estimator (BLUE), which is given by \cite{kay1993fundamentals}:
\begin{equation}
\label{eq::rec3}
\widehat{\bx}= \mU_{\F} \Big(\mU_{\F}^H\mP_{\S} \left(\mP_{\S}^T\mR_v\mP_{\S}\right)^{-1}\mP_{\S}^T\mU_{\F}\Big)^{-1}\mU_{\F}^H\mP_{\S}\left(\mP_{\S}^T\mR_v\mP_{\S}\right)^{-1} \by_{\S}.
\end{equation}
The estimator in (\ref{eq::rec3}) minimizes the least square error and, if noise is Gaussian in (\ref{r=D(s+n)}), it coincides with the minimum variance unbiased estimator, which attains the Cram\'er-Rao lower bound. It is immediate to see that (\ref{eq::rec3}) is an unbiased estimator, i.e., $\mathbb{E}\{\widehat{\bx}\}=\bx$. Furthermore, the mean square error (MSE) is given by \cite{kay1993fundamentals}:
\begin{align}
\label{eq::MSE_batch}
{\rm MSE}=\mathbb{E}\|\widehat{\bx}-\bx\|^2={\rm Tr}\left\{ \Big(\mU_{\F}^H\mP_{\S} \left(\mP_{\S}^T\mR_v\mP_{\S}\right)^{-1}\mP_{\S}^T\mU_{\F}\Big)^{-1}\right\}.
\end{align}
As a particular case, if noise is spatially uncorrelated, i.e., $\mR_v={\rm diag}\{r_1^2,\ldots,r_N^2\}$, and letting $\bu_{\F,i}^H$ be the $i$-th row of matrix $\mU_{\F}$, we obtain:
\begin{align}
\label{eq::MSE_batch_unc}
{\rm MSE}={\rm Tr}\left\{ \left(\mU_{\F}^H\mD_{\S}\mR_v^{-1}\mU_{\F}\right)^{-1} \right\}={\rm Tr}\left\{ \left(\sum_{i\in \S} \bu_{\F,i}\bu_{\F,i}^H/r_i^2\right)^{-1} \right\}.
\end{align}
This illustrates how, in the presence of uncorrelated noise, the design of the sampling set should minimize the trace of the inverse of matrix $\mU_{\F}^H\mD_{\S}\mR_v\mU_{\F}$.


So far we assumed that the true signal $\bx$ is perfectly bandlimited, i.e., $\bx\in \B_{\F}$. However, in most applications, the signals are only approximately bandlimited. In such a case, the recovery formula in (\ref{eq::rec2}) applied to such signals leads to a reconstruction error, which is analyzed next. In general, an approximately bandlimited graph signal can be expressed as
\begin{equation}\label{mismatch}
\bx=\bx_{\F}+\Delta\bx,
\end{equation}
where $\bx_{\F}=\mB_{\F}\bx$ is the bandlimited component, whereas $\Delta\bx=\mB_{\F^c} \bx$ represents the model mismatch. Sampling the signal over the vertex set $\S$ and using (\ref{eq::rec2}) as a recovery formula, then an upper bound (i.e., a worst-case) on the reconstruction error is given by \cite{eldar2003sampling}:
\begin{equation}\label{upper_bound_aliasing}
\|\widehat{\bx}-\bx\|\leq \frac{\|\Delta \bx\|}{\cos(\theta_{\max})},
\end{equation}
where $\theta_{\max}$ represents the maximum angle between the subspaces $\B_{\F}$ and $\D_{\S}$, which is defined as:
\begin{equation}
\label{cos_theta}
\begin{aligned}
\cos(\theta_{\max})=& \;\;\;\underset{\|\bz\|=1}{\inf} \;\; \| \mathbf{D}_{\S}  \bz\|_2\\
 &\;\; \text{subject to}\;\; \mathbf{B}_{\F} \bz=\bz.
\end{aligned}
\end{equation}
In particular, from (\ref{cos_theta}), it is easy to see that $\cos(\theta_{\max})>0$ if condition (\ref{|DcB|<1}) holds true. Intuitively, the bound in (\ref{upper_bound_aliasing}) says that, for the worst-case error to be minimum, the sampling and reconstruction subspaces should be as aligned as possible. Therefore, for approximatively bandlimited signals, an optimal sampling set should be selected in order to maximize the smallest maximum angle between the subspaces $\B_{\F}$ and $\D_{\S}$. Interestingly, from (\ref{cos_theta}) and (\ref{lowpass_operator}), it appears clear that
\begin{equation}\label{cos_theta2}
\cos(\theta_{\max})=\sigma_{\min}(\mD_{\S}\mU_{\F})
\end{equation}
Thus, in the presence of model mismatching, the design of the sampling set should maximize the minimum singular value of matrix $\mD_{\S}\mU_{\F}$ or, equivalently, the minimum eigenvalue of matrix $\mU_{\F}^H\mD_{\S}\mU_{\F}$.

In the next section, we will illustrate the strategies used to optimize the selection of the sampling set.

\subsection{Sampling strategies}
\label{Sec:sampl_strat}

As previously mentioned, when sampling graph signals, besides choosing the right number of samples, whenever possible it is also fundamental to have a strategy indicating {\it where} to sample, as the samples' location plays a key role in the performance of reconstruction algorithms. In principle, in the ideal case (\ref{eq::sampling equation}), any sampling set $\S$ that satisfies condition (\ref{|DcB|<1}) enables unique reconstruction through the interpolation formula in (\ref{eq::rec2}). However, in the presence of noise or model mismatching, from (\ref{eq::MSE_batch_unc}) and (\ref{upper_bound_aliasing})-(\ref{cos_theta2}), it is clear that the quality of reconstruction is strongly affected by a careful design of the sampling set $\S$.  Different costs can then be defined to measure the reconstruction error and are based on optimal design of experiments \cite{winer1971statistical}. For instance, if we seek for the optimal sampling set $\S^{\rm opt}$ of size $M$, as the set that minimizes the mean squared error in (\ref{eq::MSE_batch_unc}), we have:
\begin{equation}\label{A-design}
\displaystyle \S^{\rm A-opt}=\underset{|\S|=M}{\arg\min}\;\;{\rm Tr}\left\{ \left(\mU_{\F}^H\mD_{\S}\mR_v^{-1}\mU_{\F}\right)^{-1} \right\}.
\end{equation}
This is analogous to the so-called \textit{A-optimal design} \cite{winer1971statistical}, and is equivalent to the one proposed in \cite{tsitsvero2015signals}. Similarly, if we aim to design the optimal sampling set of size $M$ to minimize the worst-case reconstruction error in the presence of model mismatching [cf. (\ref{upper_bound_aliasing})-(\ref{cos_theta2})], we have:
\begin{equation}\label{E-design}
\displaystyle \S^{\rm E-opt}=\underset{|\S|=M}{\arg\max}\;\;\sigma_{\min}(\mD_{\S}\mU_{\F}),
\end{equation}
which is equivalent to the so-called \textit{E-optimal design} \cite{winer1971statistical}. The above criterion is equivalent to the one proposed in \cite{chen2015discrete} and, in general, it is useful to find a stable sampling set that satisfies condition (\ref{|DcB|<1}). To select the optimal sampling set, we should solve one of the problems in (\ref{A-design}) or (\ref{E-design}), which entail the selection of an $M$-element subset of $\mathcal{V}$ that optimizes the adopted design criterion. This is a finite combinatorial optimization problem (which is known to be NP-hard \cite{nemhauser1988integer}), whose solution in general requires an exhaustive search over all the possible combinations. Since the number of possible subsets grows factorially as $|\mathcal{V}|$ increases, a brute force approach quickly becomes infeasible also for graph signals of moderate dimensions. To cope with this issue, in the sequel, we will introduce lower complexity methods based on: i) greedy approaches, and ii) convex relaxations.

\subsubsection{Greedy Sampling}
In this section, we  will consider a numerically efficient, albeit sub-optimal, greedy algorithm to tackle the problem of selecting the sampling set. The greedy approach is described in Algorithm 1.
\begin{algorithm}[t]
$\textit{Input Data}:$ $\mU_{\F}$, $M$;

$\textit{Output Data}:$ $\S$, the sampling set. \smallskip

$\textit{Function}:$ \hspace{.23cm} initialize $\S\equiv \emptyset$

\hspace{2 cm} while $|\S|<M$

\hspace{2.3cm} $\displaystyle s=\arg \max_j \;f(\S\cup\{j\})$;

\hspace{2.3cm} $\S \leftarrow \S \cup \{s\}$;

\hspace{2cm} end

\protect\caption{\label{alg:Greedy1}\textbf{: Greedy selection of graph samples}}
\end{algorithm}
The simple idea underlying such method is to iteratively add to the sampling set those vertices of the graph that lead to the largest increment of an adopted performance metric, i.e., a specific set function $f(\S):2^{\mathcal{V}}\rightarrow \mathbb{R}$. We will set $f(\S)=-{\rm Tr}\left\{(\mU_{\F}^H\mD_{\S}\mR_v^{-1}\mU_{\F})^{-1} \right\}$ if we use an A-optimality design as in (\ref{A-design}), or $f(\S)=\sigma_{\min}(\mD_{\S}\mU_{\F})$ if we consider an E-optimality design as in (\ref{E-design}). In fact, since Algorithm 1 starts from the empty set, when $|\S|<|\F|$, matrix $\mU_{\F}^H\mD_{\S}\mR_v^{-1}\mU_{\F}$ is inevitably rank deficient, and its inverse does not exist. In this case, considering an A-optimality criterion, we can use $f(\S)=-{\rm Tr}\left\{(\mU_{\F}^H\mD_{\S}\mR_v^{-1}\mU_{\F})^\dagger \right\}$, which becomes equivalent to (\ref{A-design}) when condition (\ref{|DcB|<1}) is satisfied.

In general, the performance of the greedy strategy will be sub-optimal with respect to an exhaustive search procedure. Nevertheless, if the set function $f(\S)$ satisfies some structural properties, the greedy Algorithm 1 can  be proved to be close to optimality. In  particular, submodularity plays a similar role in combinatorial optimization to convexity in continuous optimization and shares other features of concave functions \cite{lovasz1983submodular}. \\

\noindent \textbf{Definition 1:} A set function
$f:2^{\mathcal{V}}\rightarrow \mathbb{R}$ is submodular if and only if the derived set functions
$f_a:2^{\mathcal{V}\setminus\{a\}}\rightarrow \mathbb{R}$
\begin{equation}
f_a(\S)=f(\S\cup \{a\})-f(\S)
\end{equation}
are monotone decreasing, i.e., if for all subsets $a,\mathcal{A},\mathcal{B}\subseteq \mathcal{V}$ it holds that
\begin{equation}
\hspace{3.7cm}\mbox{if}\;\;\mathcal{A}\subseteq\mathcal{B}\; \Rightarrow\; f_a(\mathcal{A})\geq f_a(\mathcal{B}). \nonumber\hspace{3cm}\qed\\
\end{equation}

\noindent Intuitively, submodularity is a diminishing returns property where  adding  an  element to a  smaller  set  gives  a  larger  gain than  adding  one  to  a  larger set. The maximization of monotone increasing submodular functions is still NP-hard, but the greedy heuristic can be used to obtain a solution that is provably close to optimality, with a solution having objective value within $1-1/e$ of the optimal combinatorial solution \cite{nemhauser1978analysis}.

Unfortunately, both set functions in (\ref{A-design}) and (\ref{E-design}) are not submodular functions \footnote{Interestingly, in a Bayesian recovery setting \cite{chamon2017greedy}, the negative of the MSE function was proved to be approximatively submodular.} \cite{summerscorrection}. Thus, even if the design criteria in (\ref{A-design}) and (\ref{E-design}) are useful to minimize the effect of noise [cf. (\ref{eq::MSE_batch})] and model mismatching [cf. (\ref{upper_bound_aliasing})-(\ref{cos_theta2})], respectively, we do not have theoretical performance guarantees when applying Algorithm 1 to solve such problems. Nevertheless, in the literature of experimental design, a further design criterion is often considered as a surrogate for (\ref{A-design}) [or (\ref{E-design})], which writes as:
\begin{align}\label{D-design}
\displaystyle \S^{\rm D-opt}&\,=\,\underset{|\S|=M}{\arg\max}\;\;\log \det\left(\mU_{\F}^H\mD_{\S}\mR_v^{-1}\mU_{\F}\right)\nonumber\\
&\,=\,\underset{|\S|=M}{\arg\max}\;\;\log \det\left(\sum_{i\in \S} \bu_{\F,i}\bu_{\F,i}^H/r_i^2\right).
\end{align}
This is analogous to the so-called \textit{D-optimal design} \cite{winer1971statistical}, and is equivalent to one of the methods proposed in \cite{tsitsvero2015signals} for graph signals sampling. This design strategy aims at maximizing the volume of the parallelepiped built with the selected rows $\{\bu_{\F,i}^H\}_{i\in \S}$ of matrix $\mU_{\F}$ (weighted by the inverse of the noise variances $\{r_i^2\}_{i\in \S}$), and the rationale is to design a well suited basis for the graph signal that we want to estimate. Interestingly, the set function $f(\S)=\log \det(\mU_{\F}^H\mD_{\S}\mR_v^{-1}\mU_{\F})$ is a monotone increasing submodular function \cite{shamaiah2010greedy}. Thus, in this case, the greedy approach in Algorithm 1 can be used to solve (\ref{D-design}) with provable performance guarantees. In the implementation of Algorithm 1, when $|\S|<|\F|$ and matrix $\mU_{\F}^H\mD_{\S}\mR_v^{-1}\mU_{\F}$ is rank-deficient, we can use $f(\S)=\log {\rm pdet}(\mU_{\F}^H\mD_{\S}\mR_v^{-1}\mU_{\F})$, which is equivalent to (\ref{D-design}) when sampling condition (\ref{|DcB|<1}) on perfect recovery is satisfied.

\subsubsection{Convex Relaxation}

Another possible algorithmic solution to problems like \eqref{A-design}, \eqref{E-design}, \eqref{D-design}, is to resort to convex relaxation techniques, see, e.g., \cite{joshi2009sensor,chepuri2015sparsity,chepuri2016sparse}. To this aim, let us introduce the indicator vector $\bd=\{d_i\}_{i=1}^N$, such that the i-th entry is binary and given by $d_i=1$ if node $i$ belongs to the sampling set $\S$, and $d_i=0$ otherwise. Using the indicator vector $\bd$, we can build a general sampling design problem that can be cast as:
\begin{equation}\label{Conv_relax}
\setlength{\jot}{6pt}
\begin{aligned}
&\hspace{-.3cm}\min_{\bd} \;\; f(\bd) \\
&\hbox{s.t.} \;\;\;\,\mathbf{1}^T\bd=M, \\
&\;\;\;\;\;\;\;\;\bd\in\{0,1\}^N,
\end{aligned}
\end{equation}
where $f(\bd)={\rm Tr}\left\{(\mU_{\F}^H{\rm diag}(\bd)\mR_v^{-1}\mU_{\F})^{-1} \right\}$ for the A-optimal design [cf. \eqref{A-design}], $f(\bd)=-\sigma_{\min}({\rm diag}(\bd)\mU_{\F})$ for the E-optimal design [cf. \eqref{E-design}], and $f(\bd)=-\log\det(\mU_{\F}^H{\rm diag}(\bd)\mR_v^{-1}\mU_{\F})$ for the D-optimal design [cf. \eqref{D-design}]. Problem (\ref{Conv_relax}) has still combinatorial complexity, due to the integer nature of the optimization variable $\bd$. Nevertheless, we can simple relax the indicator variable $\bd$ to be a real vector belonging to the hypercube $[0,1]^N$, thus leading to the following formulation:
\begin{equation}\label{Conv_relax2}
\setlength{\jot}{6pt}
\begin{aligned}
&\hspace{-.3cm}\underset{\bd\in[0,1]^N}{\min}\;\; f(\bd) \\
&\hbox{s.t.} \;\;\;\;\;\,\mathbf{1}^T\bd=M.
\end{aligned}
\end{equation}
It is now easy to check that problem (\ref{Conv_relax2}) is convex for all objective functions $f(\bd)$ defined by the design criteria in
\eqref{A-design}, \eqref{E-design}, \eqref{D-design}, and its global solution can be found using efficient numerical methods \cite{boyd2004convex}. Of course, since (\ref{Conv_relax2}) is a relaxed version of (\ref{Conv_relax}), its real solution $\bd^*$ might need a further selection/thresholding step in order to generate a valid integer vector, as required by (\ref{Conv_relax}). For instance, a possible solution is to select the $M$ sampling nodes as the ones associated with the $M$ largest entries of $\bd^*$. Finally, one can also formulate the sampling design problem in the opposite way with respect to (\ref{Conv_relax}). In particular, we might be interested in searching for the optimal indicator vector $\bd$ that minimizes the number of collected samples, i.e., the $\ell_0$ norm of the vector $\bd$, under a performance requirement on the function $f(\bd)$, e.g, the MSE in (\ref{eq::MSE_batch_unc}). This category of design problems takes the name of sparse sensing \cite{chepuri2015sparsity,chepuri2016sparse} and, using similar relaxation arguments as before, such criteria lead to convex optimization problems.

In the next section, we will illustrate some numerical results aimed at assessing the performance of the described sampling and recovery strategies.

\begin{figure}
 \centering
 \subfigure[Graph topology and sampling set]
   {\includegraphics[width=6cm]{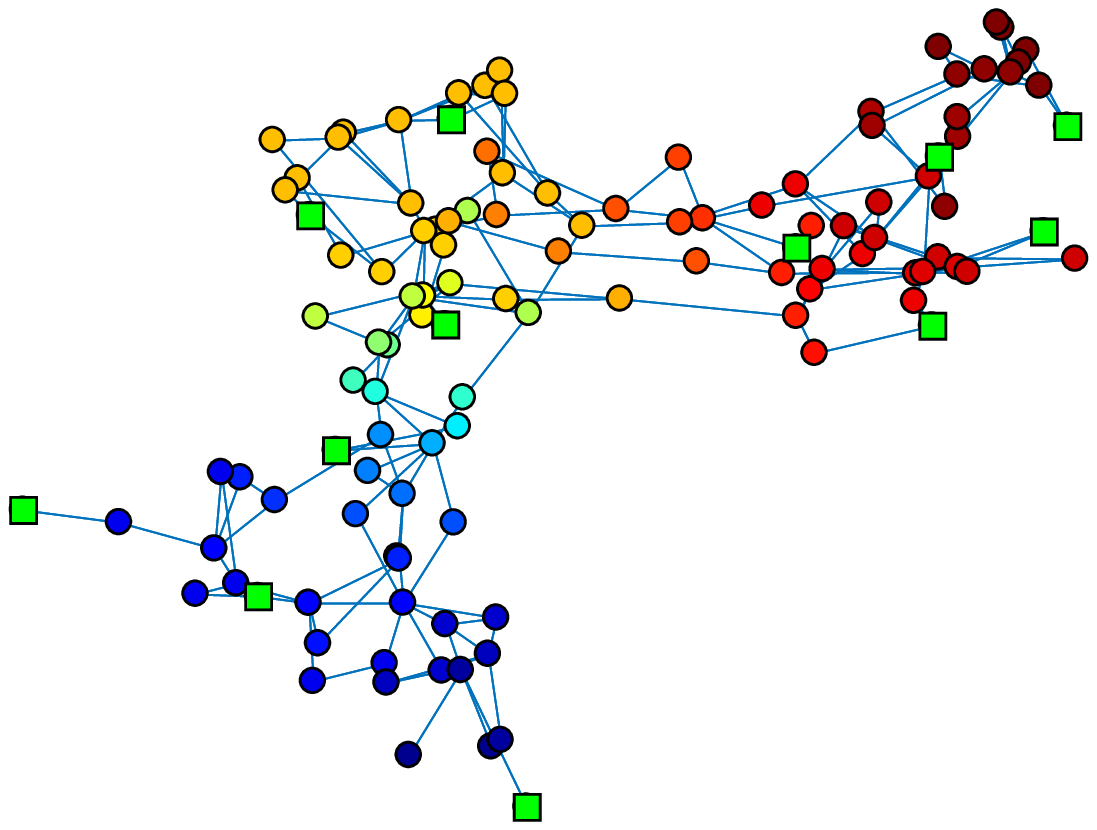}}
    \hspace{0.1cm}
 \subfigure[MSE versus number of samples]
   {\includegraphics[width=5.5cm]{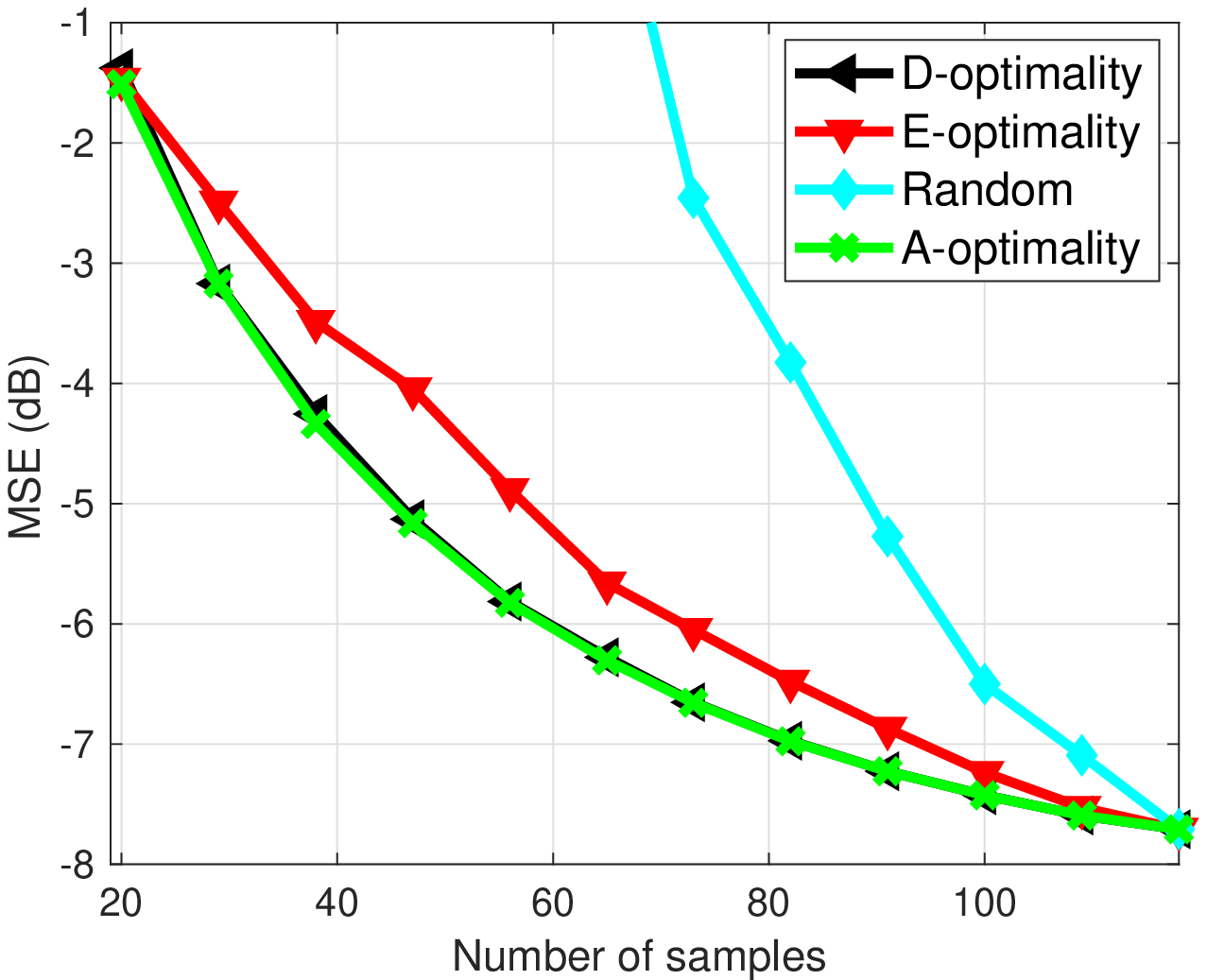}}
 \caption{Sampling and recovery over the IEEE 118 Bus graph}
 \label{Nets}
 \end{figure}

\subsection{Numerical Results}

In the sequel, we consider the application of the described sampling and recovery methods to two real graphs: a power network, and a road network.\\

\noindent \textbf{Sampling over power grids.} The first example involves the IEEE  118  Bus  Test  Case,  i.e.,  a  portion  of  the American Electric Power System (in the Mid-western US) as of December 1962. The graph is composed of 118 nodes (i.e., buses), its  topology  (i.e.,  transmission  lines  connecting buses) is depicted in Fig. \ref{Nets}(a) \cite{sun2005simulation}, and the color of each node encodes the entries of the eigenvector of the Laplacian matrix associated to the second smallest eigenvalue (these entries highlight the presence of three distinct clusters in the network). As illustrated in \cite{pasqualetti2014controllability}, the dynamics of the power generators give rise to smooth graph signals, so that the bandlimited assumption is justified, although in approximate sense. In our example, we randomly generate a lowpass signal with $|\F|=12$ and we take a number of samples equal to $|\S|=12$. The green squares correspond to the samples selected using the greedy Algorithm 1 and the A-optimality design in (\ref{A-design}). It is interesting to see how the method distributes samples over the clusters, and puts the samples, within each cluster, quite far apart from each other. Finally, we compare the reconstruction performance obtained by the considered greedy sampling strategies [cf. (\ref{A-design}), (\ref{E-design}), and (\ref{D-design})] and by random sampling. To this aim we consider graph signal recovery, in the presence of an uncorrelated, zero mean Gaussian random noise with unit variance, and considering $|\F|=20$. Thus, Fig. \ref{Nets}(b) reports the MSE in (\ref{eq::MSE_batch_unc}) versus the number of samples collected over the graph. As expected, the MSE decreases as the number of  samples increases. We can also notice how random sampling performs quite poorly, whereas the A-optimal design (and the D-optimal design) outperforms all other strategies.\\

\begin{figure}[t]
 \centering
 \subfigure[Graph topology and sampling set]
   {\includegraphics[width=5.9cm]{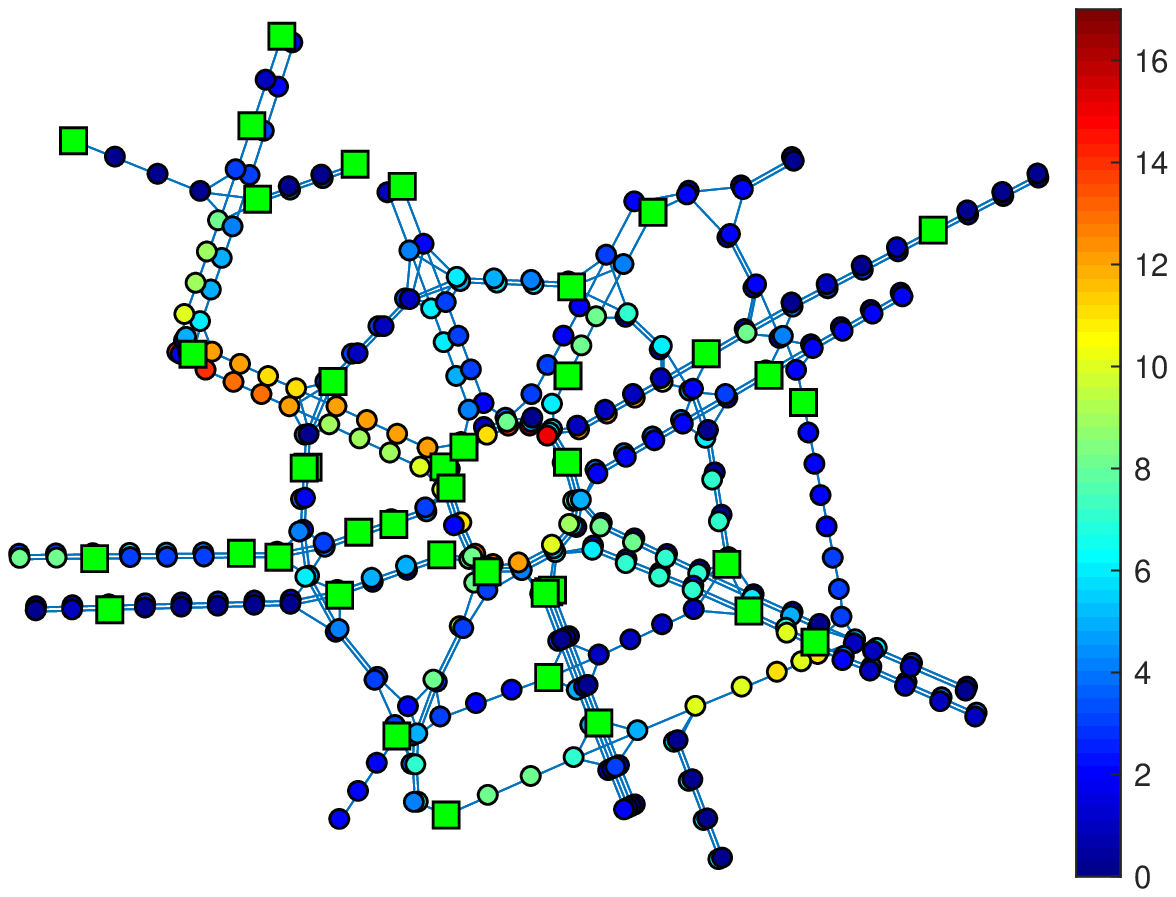}}
 \hspace{0.1cm}
\subfigure[NMSE versus bandwidth]
  { \includegraphics[width=5.5cm]{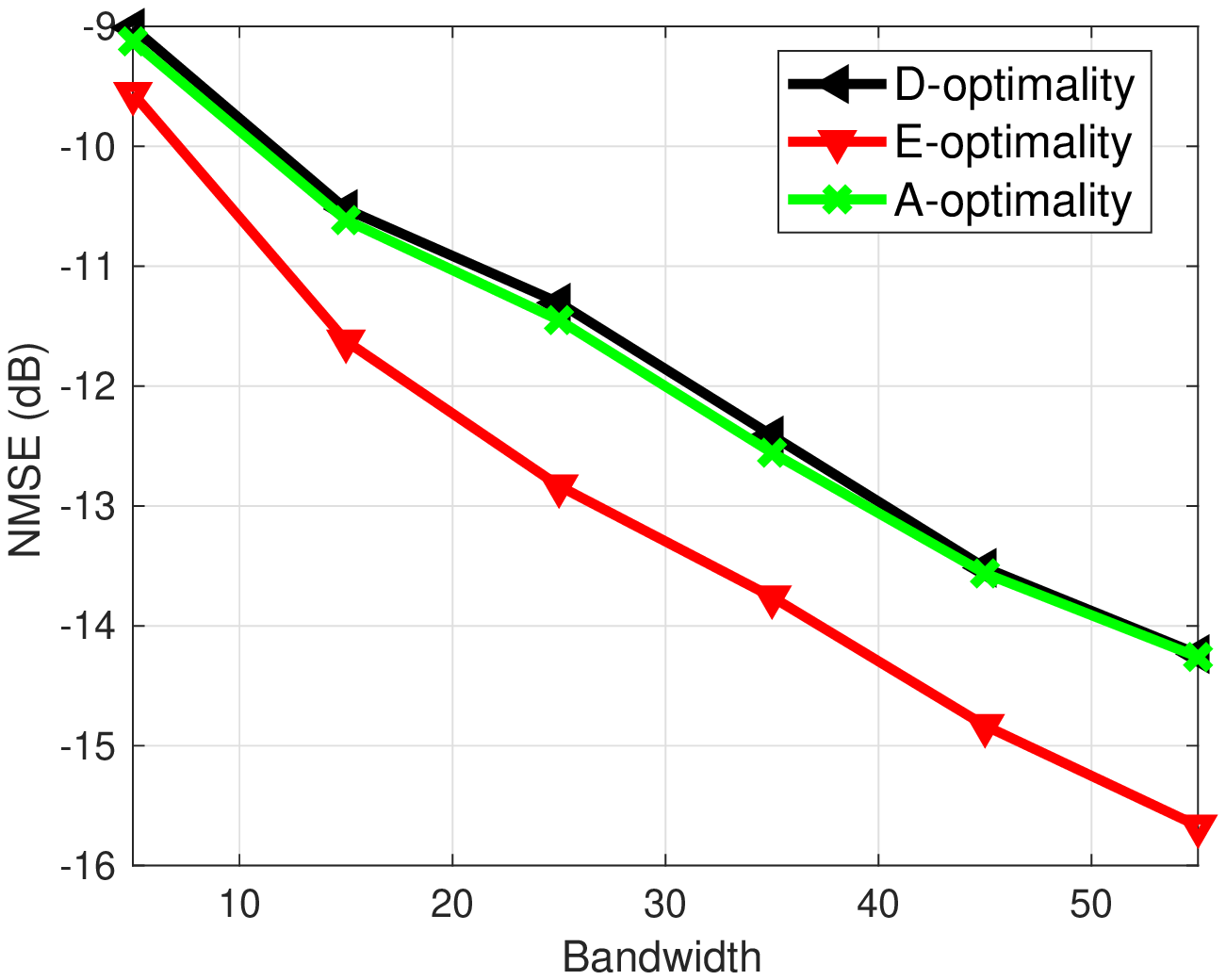}}
 \caption{Sampling and recovery of vehicular flows over road networks}\label{Nets2}
\end{figure}

\noindent \textbf{Traffic flow prediction over road networks.} The second example considers sampling of a portion of the road network in the neighborhood of Mazzini square, which is in the city of Rome, Italy. We have placed landmarks (nodes of the graph) over the streets in a regular fashion, and connected adjacent landmarks on the same lane and at the junctions, thus obtaining the graph topology depicted in Fig. \ref{Nets2}(a). The signal lying on the vertices of the graph represents the flow (number of vehicles per unit of time) of cars passing through the landmarks during a period of 30 seconds, and was obtained using a realistic simulator of urban mobility, namely, SUMO \cite{behrisch2011sumo}. The similarity of values of the signal over adjacent nodes makes the signal to be smooth, but only approximatively bandlimited. In this sense, there is a mismathcing between the observed signal and the bandlimited model used for processing. The goal is to infer the traffic situation over all the road network from a small number of collected samples. Thus, we consider a bandwidth equal to $|\F|=30$, and we take a number of samples equal to $|\S|=40$. The green squares in Fig. \ref{Nets2}(a) correspond to the samples selected using the convex relaxation in (\ref{Conv_relax2}) and the E-optimality design in (\ref{E-design}). It is interesting to see how the method distributes almost uniformly the samples over the streets and the junctions. Finally, we compare the reconstruction performance obtained by the sampling strategies based on convex relaxation [cf. (\ref{A-design}), (\ref{E-design}), and (\ref{D-design})] in the presence of model mismatching. To this aim, Fig. \ref{Nets2}(b) reports the normalized MSE (NMSE), i.e., ${\rm NMSE}=\|\widehat{\bx}-\bx\|^2/\|\bx\|^2$, versus the graph signal bandwidth, and selecting $|\S|=|\F|$. From Fig. \ref{Nets}(b), as expected, the NMSE decreases if we use a larger bandwidth. Furthermore, in this case, the E-optimal design outperforms all other strategies, as expected from (\ref{upper_bound_aliasing})-(\ref{cos_theta2}).

 \subsection{$\ell_1$-norm reconstruction of graph signals}

Let us consider now a different observation model, where a bandlimited graph signal $\bx\in \B_{\F}$ is observed everywhere, but a subset of nodes
$\S$ is strongly corrupted by noise, i.e.,
\begin{equation}
\label{r=D(s+n)}
\by = \bx+\mD_{\S}\bv,
\end{equation}
where the noise is arbitrary but bounded, i.e., $\|\bv\|_1<\infty$. This model is relevant, for example, in sensor networks, where a subset of sensors can be damaged
or  highly  interfered.  The  problem  in  this  case  is  whether  it is possible to recover the graph signal $\bx$ exactly, i.e., irrespective of noise. Even though this is not a sampling problem, the solution is  still  related  to  sampling  theory.  Clearly,  if  the  signal $\bx$
is bandlimited  and  if  the  indexes  of  the  noisy  observations  are known, the answer is simple: $\bx\in \B_{\F}$
can be perfectly recovered from the noisy-free observations, i.e., by completely discarding the noisy observations, if the sampling theorem condition (\ref{|DcB|<1})
holds true. But of course, the challenging situation occurs when the location of the noisy observations is not known. In such a
case, we may resort to an $\ell_1$-norm minimization, by formulating the problem as follows \cite{tsitsvero2015signals}:
\begin{equation}
\label{eq::l1_minimization}
\widehat{\bx} = \arg \min_{\bx \in \B} \| \by-\bx \|_1.
\end{equation}
We provide next some theoretical bounds on the cardinality of $\S$ and $\F$ enabling perfect recovery of the bandlimited graph signal using (\ref{eq::l1_minimization}). To this purpose, we recall the following lemma from \cite{tsitsvero2015signals}.
\begin{lemma}
Let us define
$\mu := \max_{\substack{j \in \F \\ i \in \V}} \abs{u_j (i)}$,
where $u_j (i)$ is the $i$-th entry of the $j$-th vector of the graph Fourier basis. If for some {\it unknown} $\S$, we have
\begin{equation}
\label{eq::unknown_S_condition}
\abs{\S} < \frac{1}{2 \mu^2\abs{\F}},
\end{equation}
then the $\ell_1$-norm reconstruction method (\ref{eq::l1_minimization}) recovers $\bx \in \B$ perfectly, i.e. $\widehat{\bx}=\bx$, for any arbitrary noise $\bv$ present on at most $\abs{\S}$ vertices.
\end{lemma}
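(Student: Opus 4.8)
The plan is to run a Donoho--Stark / null-space-property argument: use the optimality of the $\ell_1$ minimizer to constrain the error $\bh := \widehat{\bx}-\bx$, and then kill that error with an uncertainty principle valid for bandlimited graph signals. The first observation is that both $\widehat{\bx}$ and $\bx$ lie in $\B$, so $\bh\in\B$ as well (because $\B$ is a linear subspace); this is exactly what makes the bandlimited estimate below applicable to $\bh$.

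\emph{From optimality to a cone condition.} Since $\bx\in\B$ is feasible in (\ref{eq::l1_minimization}), the minimizer obeys $\|\by-\widehat{\bx}\|_1\le\|\by-\bx\|_1=\|\mD_\S\bv\|_1$. Writing $\by-\widehat{\bx}=\mD_\S\bv-\bh$ and splitting the $\ell_1$ norm over $\S$ and $\S^c$ --- using the reverse triangle inequality on $\S$, where $\mD_\S\bv$ is supported, and the fact that $\mD_\S\bv$ vanishes on $\S^c$ --- one gets $\|\by-\widehat{\bx}\|_1\ge\|\mD_\S\bv\|_1-\|\mD_\S\bh\|_1+\|\mD_{\S^c}\bh\|_1$. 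Comparing with the upper bound yields $\|\mD_{\S^c}\bh\|_1\le\|\mD_\S\bh\|_1$, hence $\|\bh\|_1\le 2\,\|\mD_\S\bh\|_1$.

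\emph{Uncertainty principle for $\bh\in\B$.} Because $\bh$ is $\F$-bandlimited, $\bh=\mU_\F\bs_\F$ with $\bs_\F=\mU_\F^H\bh$. From $h_i=\sum_{j\in\F}u_j(i)\,s_j$ and $|u_j(i)|\le\mu$ we get $|h_i|\le\mu\,\|\bs_\F\|_1$; from $s_j=\sum_{i\in\V}\oline{u_j(i)}\,h_i$ we get $|s_j|\le\mu\,\|\bh\|_1$, and therefore $\|\bs_\F\|_1\le|\F|\,\mu\,\|\bh\|_1$. Summing $|h_i|$ over $i\in\S$ then gives $\|\mD_\S\bh\|_1\le|\S|\,\mu\,\|\bs_\F\|_1\le|\S|\,|\F|\,\mu^2\,\|\bh\|_1$.

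\emph{Conclusion.} Chaining the two estimates, $\|\bh\|_1\le 2\,|\S|\,|\F|\,\mu^2\,\|\bh\|_1$. If $\bh\ne\b0$ this forces $|\S|\ge\frac{1}{2\mu^2|\F|}$, contradicting hypothesis (\ref{eq::unknown_S_condition}); hence $\bh=\b0$ and $\widehat{\bx}=\bx$, regardless of the (bounded) noise $\bv$ and of the unknown set $\S$ of corrupted vertices. The argument is short once the right inequalities are in place; the only point needing care is the $\ell_1$-splitting in the first step, and the mild subtlety in the second step is that the coherence bound $\mu$ must be applied twice --- once through $\mU_\F$ and once through $\mU_\F^H$ --- which is precisely what produces the factor $\mu^2|\F|$ matching the stated threshold (a cruder Cauchy--Schwarz bound would give a worse constant, e.g.\ $1/(4\mu^2|\F|)$). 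I do not expect any genuine obstacle beyond keeping track of these constants.
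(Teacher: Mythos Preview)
Your argument is correct: the $\ell_1$ optimality gives the cone condition $\|\bh\|_1\le 2\|\mD_\S\bh\|_1$, and the two-sided coherence bound (applying $|u_j(i)|\le\mu$ once through $\mU_\F$ and once through $\mU_\F^H$) yields $\|\mD_\S\bh\|_1\le |\S|\,|\F|\,\mu^2\,\|\bh\|_1$, after which the conclusion is immediate under (\ref{eq::unknown_S_condition}). Note that the paper does not actually prove this lemma --- it merely \emph{recalls} it from \cite{tsitsvero2015signals} --- so there is no in-paper proof to compare against; your Donoho--Stark / null-space-property route is precisely the standard argument used in that reference and is the expected proof.
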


\begin{figure}[t]
\centering
\includegraphics[width=6cm,height=4.5cm]{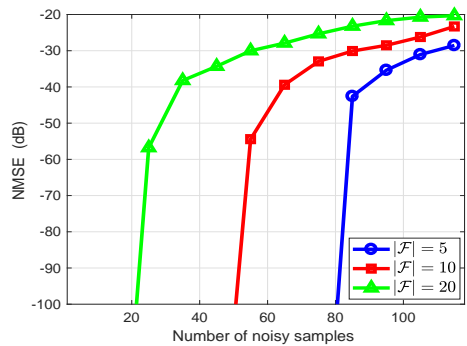}
\caption{$\ell_1$-norm reconstruction: NMSE versus number of noisy samples.}
\label{fig:logan}
\end{figure}

An example of $\ell_1$ reconstruction based on (\ref{eq::l1_minimization}) is useful to
grasp some interesting features. We consider the IEEE 118 bus graph in Fig. \ref{Nets}(a). The signal
is assumed to be bandlimited, with a spectral content limited to the first
$|\F|$ eigenvectors of the Laplacian matrix. In Fig. \ref{fig:logan}, we report the behavior of the NMSE associated to the $\ell_1$-norm
estimate in (\ref{eq::l1_minimization}) , versus the number of noisy samples, considering different values of bandwidth
$|\F|$. As we can notice from
Fig. \ref{fig:logan}, for any value of $|\F|$, there exists a threshold value such that, if the number of noisy samples is lower than the threshold, the reconstruction of the signal is error free. As expected, a smaller signal bandwidth allows perfect reconstruction with a larger number of noisy samples.

\section{Adaptive Sampling and Recovery}
\label{Sec:Adaptive_sampl_rec}

In this section, we consider processing methods capable to learn and track dynamic graph signals from a carefully designed, possibly time-varying, sampling set. To this aim, let us assume that, at each time $n$, noisy samples of the signal are taken over a (randomly) time-varying subset of vertices, according to the following model:
\begin{align}
\label{lin_observation}
\by[n]\,=\,&\mD_{\S[n]}\left(\bx+\bv[n]\right)=\,\mD_{\S[n]}\mU_{\F}\bs_{\F}+\mD_{\S[n]}\bv[n]
\end{align}
where $\mD_{\S[n]}={\rm diag}\{d_i[n]\}_{i=1}^N\in\mathbb{R}^{N\times N}$ [cf. (\ref{D})], with $d_i[n]$ denoting a random sampling binary coefficient, which is equal to 1 if $i\in \S[n]$, and 0 otherwise (i.e., $\S[n]$ represents the \textit{instantaneous}, random sampling set at time $n$); and $\bv[n]\in\mathbb{C}^N$ is zero-mean, spatially and temporally independent observation noise, with covariance matrix $\mR_v=\diag\{r_1^2,\ldots,r_N^2\}$. The estimation task consists in recovering the vector $\bx$ (or, equivalently, its GFT $\bs_{\F}$) from the noisy, streaming, and partial observations $\by[n]$ in (\ref{lin_observation}). Following an LMS approach \cite{sayed2011adaptive}, from (\ref{lin_observation}), the optimal estimate for $\bs_{\F}$ can be found as the vector that solves the following optimization problem:
\begin{align}
\label{LMS_problem}
&\min_{\boldsymbol{s}} \; \mathbb{E}\, \|\mD_{\S[n]}(\by[n]-\mU_{\F}\bs)\|^2
\end{align}
where in (\ref{LMS_problem}) we have exploited the fact that $\mD_{\S[n]}$ is an idempotent matrix for any fixed $n$ [cf. (\ref{D})]. An LMS-type solution optimizes (\ref{LMS_problem}) by means of a stochastic steepest-descent procedure, relying only on instantaneous information. Thus, letting $\widehat{\bx}[n]$ be the current estimates of vector $\bx$, the LMS algorithm for graph signals evolves as illustrated in Algorithm 2 \cite{dilo2016adaptive}.

\begin{algorithm}[t]
\vspace{.1cm}
Start with random $\widehat{\bx}[0]$. Given a step-size $\mu>0$, for $n\geq0$, repeat:
\begin{equation}\label{LMS}
\widehat{\bx}[n+1]=\widehat{\bx}[n]+\mu\,\mB_{\F}\mD_{\S[n]}\left(\by[n]-\widehat{\bx}[n]\right).\nonumber \vspace{-.3cm}
\end{equation}
\protect\caption{\textbf{: LMS on Graphs}}
\end{algorithm}

In the sequel, we illustrate how the design of the sampling strategy affects the reconstruction capability of Algorithm 2. To this aim, let us denote the \textit{expected sampling set} by $\overline{\S}=\{i=1,\ldots,N \,|\, p_i>0\}$, i.e., the set of nodes of the graph that are sampled with a probability $p_i=\mathbb{E}\{d_i[n]\}$ strictly greater than zero. Also, let $\overline{\mathcal{S}}_c$ be the complement set of $\overline{\S}$. Then, the following results illustrates the conditions for adaptive recovery of graph signals \cite{dilo2016adaptive,di2017adaptive}.

\begin{theorem}
Any $\F$-bandlimited graph signal can be reconstructed via the adaptive Algorithm 2 if and only if
\begin{equation}\label{|EDcB|<1}
\left\| \mD_{\,\overline{\mathcal{S}}_c}\mU_{\F}\right\|_2 < 1,
\end{equation}
i.e., if there are no $\F$-bandlimited signals that are perfectly localized on $\overline{\mathcal{S}_c}$.
\end{theorem}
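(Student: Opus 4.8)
The plan is to reduce the statement to a condition on the first moment of the reconstruction error and then recognize that condition as (\ref{|EDcB|<1}). First I would observe that, since every update of Algorithm 2 adds a vector in the range of $\mB_{\F}$, the iterates stay in $\B_{\F}$ whenever $\widehat{\bx}[0]\in\B_{\F}$. Using that $\mD_{\S[n]}$ is idempotent (so $\mD_{\S[n]}\by[n]=\by[n]$) and that $\mB_{\F}\bx=\bx$ for $\bx\in\B_{\F}$ [cf. (\ref{Bx=x})], I would rewrite the recursion for the error $\widetilde{\bx}[n]:=\widehat{\bx}[n]-\bx$ as
\begin{equation}
\widetilde{\bx}[n+1]=\left(\mI-\mu\,\mB_{\F}\mD_{\S[n]}\mB_{\F}\right)\widetilde{\bx}[n]+\mu\,\mB_{\F}\mD_{\S[n]}\bv[n],
\end{equation}
where I also used $\mB_{\F}\widetilde{\bx}[n]=\widetilde{\bx}[n]$. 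Taking $\{\S[n]\}$ to be drawn independently over time (and of the initialization) and $\bv[n]$ zero-mean and independent of the past, I would take expectations to get the deterministic linear recursion
\begin{equation}
\mathbb{E}\{\widetilde{\bx}[n+1]\}=\left(\mI-\mu\,\mB_{\F}\overline{\mD}\,\mB_{\F}\right)\mathbb{E}\{\widetilde{\bx}[n]\},\qquad \overline{\mD}:=\mathbb{E}\{\mD_{\S[n]}\}=\diag\{p_1,\ldots,p_N\},
\end{equation}
living on the $|\F|$-dimensional subspace $\B_{\F}$.

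Reconstruction in the mean (asymptotic unbiasedness) means $\mathbb{E}\{\widetilde{\bx}[n]\}\to\b0$ for every initialization, i.e., the operator $\mI-\mu\,\mB_{\F}\overline{\mD}\,\mB_{\F}$ has spectral radius $<1$ on $\B_{\F}$. Written in the Fourier basis $\mU_{\F}$ it becomes $\mI-\mu\,\mU_{\F}^H\overline{\mD}\,\mU_{\F}$ with $\mU_{\F}^H\overline{\mD}\,\mU_{\F}\succeq\b0$, so a stabilizing step-size exists (any $0<\mu<2/\lambda_{\max}(\mU_{\F}^H\overline{\mD}\,\mU_{\F})$ makes the mean recursion a contraction) if and only if $\mU_{\F}^H\overline{\mD}\,\mU_{\F}\succ\b0$. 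The key algebraic step is then the sandwich $p_{\min}\,\mD_{\overline{\S}}\preceq\overline{\mD}\preceq\mD_{\overline{\S}}$, valid with $p_{\min}=\min_{i\in\overline{\S}}p_i>0$ because $p_i\in(0,1]$ on $\overline{\S}$ and $p_i=0$ off it; it shows that $\mU_{\F}^H\overline{\mD}\,\mU_{\F}\succ\b0$ if and only if $\mU_{\F}^H\mD_{\overline{\S}}\mU_{\F}\succ\b0$, so the condition depends only on the \emph{expected} sampling set, not on the precise probabilities. Finally $\mU_{\F}^H\mD_{\overline{\S}}\mU_{\F}=\mI-\mU_{\F}^H\mD_{\overline{\S}_c}\mU_{\F}$ with $\mU_{\F}^H\mD_{\overline{\S}_c}\mU_{\F}=(\mD_{\overline{\S}_c}\mU_{\F})^H(\mD_{\overline{\S}_c}\mU_{\F})\succeq\b0$ of largest eigenvalue $\|\mD_{\overline{\S}_c}\mU_{\F}\|_2^2$, so positive definiteness is equivalent to $\|\mD_{\overline{\S}_c}\mU_{\F}\|_2<1$. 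This gives the ``if'' direction; in the noiseless case the error in fact converges to zero, and in the noisy case one checks from the recursion that a sufficiently small $\mu$ also keeps $\mathbb{E}\|\widetilde{\bx}[n]\|^2$ bounded.

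For necessity I would argue by contraposition. If $\|\mD_{\overline{\S}_c}\mU_{\F}\|_2=1$, then by Theorem \ref{theorem_unit_eigenvalue} (equivalently (\ref{|BD|=1=|DB|})) applied to the vertex set $\overline{\S}_c$ there is a nonzero $\bx_0\in\B_{\F}\cap\D_{\overline{\S}_c}$, hence $\mD_{\overline{\S}}\bx_0=\b0$. Since $p_i=0$ for $i\notin\overline{\S}$ forces $d_i[n]=0$ almost surely, one has $\S[n]\subseteq\overline{\S}$ and therefore $\mD_{\S[n]}\bx_0=\mD_{\S[n]}\mD_{\overline{\S}}\bx_0=\b0$ a.s.\ for every $n$; consequently $\bx$ and the equally $\F$-bandlimited signal $\bx+\bx_0$ produce identical observation sequences in (\ref{lin_observation}), and the $\bx_0$-component of the error is left untouched by every update. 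Thus no procedure based on $\{\by[n]\}$ — Algorithm 2 in particular — can recover a generic bandlimited signal, so (\ref{|EDcB|<1}) is necessary.

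The part I expect to be most delicate is not the linear algebra, which parallels the proof of Theorem \ref{theorem::sampling theorem} with the sandwich inequality doing the work of collapsing the probability-dependent statement to the topological condition on $\overline{\S}$, but rather making the reduction to the mean recursion airtight: justifying the independence assumptions used to factor $\mathbb{E}\{\mD_{\S[n]}\widetilde{\bx}[n]\}=\overline{\mD}\,\mathbb{E}\{\widetilde{\bx}[n]\}$, keeping the iterates inside $\B_{\F}$, and — for a genuinely complete ``reconstruction'' claim — supplementing mean convergence with a mean-square stability bound on $\mathbb{E}\|\widetilde{\bx}[n]\|^2$ valid for small enough $\mu$.
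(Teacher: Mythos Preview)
The paper does not actually prove this theorem: it is stated with a citation to \cite{dilo2016adaptive,di2017adaptive} and no argument appears in the text, so there is no in-paper proof to compare against.

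That said, your argument is correct and is the natural one. The error recursion, the passage to the mean under independence of $\{\mD_{\S[n]}\}$, the reduction to positive definiteness of $\mU_{\F}^H\overline{\mD}\,\mU_{\F}$, and the sandwich $p_{\min}\,\mD_{\overline{\S}}\preceq\overline{\mD}\preceq\mD_{\overline{\S}}$ to strip away the actual probability values are exactly the right steps. Your final algebraic manipulation, $\mU_{\F}^H\mD_{\overline{\S}}\mU_{\F}=\mI-\mU_{\F}^H\mD_{\overline{\S}_c}\mU_{\F}$ together with $\|\mD_{\overline{\S}_c}\mU_{\F}\|_2^2=\lambda_{\max}(\mU_{\F}^H\mD_{\overline{\S}_c}\mU_{\F})$, is precisely the device the paper \emph{does} write out in its proof of Theorem~\ref{theorem::sampling theorem}, and your necessity argument (an $\F$-bandlimited signal supported on $\overline{\S}_c$ is invisible to every realization of $\mD_{\S[n]}$ since $\S[n]\subseteq\overline{\S}$ almost surely) is the correct adaptive analogue of the second half of that proof.

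One minor caveat worth tightening: Algorithm~2 as stated allows an arbitrary $\widehat{\bx}[0]$, not necessarily in $\B_{\F}$. Since every increment lies in the range of $\mB_{\F}$, any component of $\widehat{\bx}[0]$ orthogonal to $\B_{\F}$ is never touched, so strictly you should either assume $\widehat{\bx}[0]\in\B_{\F}$ (as you do) or phrase convergence in terms of $\mB_{\F}\widehat{\bx}[n]$. This is a cosmetic point, not a gap in the logic.
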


\noindent Differently from batch sampling and recovery of graph signals, see, e.g.,
\cite{pesenson2008sampling,narang2013signal,chen2015discrete,tsitsvero2015signals,marques2016sampling}, condition (\ref{|EDcB|<1}) depends on the \textit{expected} sampling set. In particular, it implies that there are no $\F$-bandlimited signals that are perfectly localized over the set $\oline{\S}_c$. As a consequence, the adaptive Algorithm 2 with probabilistic sampling does not need to collect all the data necessary to reconstruct one-shot the graph signal at each iteration, but can learn acquiring the needed information over time. The only important thing required by condition (\ref{|EDcB|<1}) is that a sufficiently large number of nodes is sampled in \textit{expectation}.

We now illustrate the mean-square performance of Algorithm 2. The main results are summarized in the following Theorem \cite{di2017adaptive}.

\begin{theorem}
Assume spatial and temporal independence of the random variables extracted by the sampling process $\{d_i[n]\}_{i,n}$. Then, for any initial condition, Algorithm 2 is mean-square error stable if the sampling probability vector $\bp$ and the step-size $\mu$ satisfy (\ref{|EDcB|<1}) and
$$\displaystyle 0< \mu < \frac{2\lambda_{\min}\left(\mU_{\F}^H\,{\rm diag}(\bp)\mU_{\F}\right)}{\lambda^2_{\max}\left(\mU_{\F}^H\,{\rm diag}(\bp)\mU_{\F}\right)}.$$
Furthermore, under a small step-size assumtpion, the MSE writes as:
\begin{align}\label{MSD}
{\rm MSE}&=\lim_{n\rightarrow\infty}\,\mathbb{E}\|\widehat{\bx}[n]-\bx[n]\|^2 \nonumber\\
&=\,\frac{\mu}{2}\,{\rm Tr}\left[\left(\mU_{\F}^H\,{\rm diag}(\bp)\mU_{\F}\right)^{-1}\mU_{\F}^H\,{\rm diag}(\bp)\mR_v\mU_{\F}\right]
\end{align}
and the convergence rate $\alpha$ is well approximated by
\begin{equation}\label{learning_rate}
\alpha=
 1-2\mu \lambda_{\min}\left(\mU_{\F}^H\,{\rm diag}(\bp)\mU_{\F}\right).
\end{equation}
\end{theorem}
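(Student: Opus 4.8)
The plan is to analyze the error recursion of Algorithm 2 and reduce the mean-square analysis to a linear recursion on a covariance vector. First I would introduce the error vector $\widetilde{\bx}[n] = \widehat{\bx}[n] - \bx$. Since $\bx$ is $\F$-bandlimited, $\bx = \mB_{\F}\bx$, and I can restrict attention to the bandlimited subspace by writing everything in terms of the GFT coefficients $\widetilde{\bs}[n] = \mU_{\F}^H\widetilde{\bx}[n]$. Substituting the observation model (\ref{lin_observation}) into the update (\ref{LMS}) and using $\mB_{\F} = \mU_{\F}\mU_{\F}^H$ together with $\mB_{\F}\bx = \bx$, one gets a recursion of the form $\widetilde{\bs}[n+1] = \big(\mI - \mu\,\mU_{\F}^H\mD_{\S[n]}\mU_{\F}\big)\widetilde{\bs}[n] + \mu\,\mU_{\F}^H\mD_{\S[n]}\bv[n]$. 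Taking expectations and using spatial/temporal independence of the sampling coefficients and noise (so $\mathbb{E}\{\mD_{\S[n]}\} = {\rm diag}(\bp)$), the mean error evolves driven by $\mI - \mu\,\mU_{\F}^H{\rm diag}(\bp)\mU_{\F}$, which is stable iff $0 < \mu < 2/\lambda_{\max}(\mU_{\F}^H{\rm diag}(\bp)\mU_{\F})$; here condition (\ref{|EDcB|<1}) is exactly what guarantees $\mU_{\F}^H{\rm diag}(\bp)\mU_{\F} \succ 0$, i.e. the matrix is invertible, so the fixed point is unique.

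Next I would turn to mean-square stability. Let $\bK[n] = \mathbb{E}\{\widetilde{\bs}[n]\widetilde{\bs}[n]^H\}$. Squaring the error recursion and taking expectations, the independence assumptions let me expand $\mathbb{E}\{\mD_{\S[n]}\mU_{\F}\mU_{\F}^H\mD_{\S[n]}\}$ in terms of $\bp$ (the binary coefficients satisfy $d_i[n]^2 = d_i[n]$, so second moments reduce to first moments on the diagonal and products of $p_i p_j$ off-diagonal). This yields a linear recursion $\mathrm{vec}(\bK[n+1]) = \bGamma(\mu,\bp)\,\mathrm{vec}(\bK[n]) + \mu^2\,\mathrm{vec}(\bY)$ for an appropriate transition matrix $\bGamma$ and a noise-driven constant term $\bY$ built from $\mR_v$ and $\bp$. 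Mean-square stability is equivalent to the spectral radius of $\bGamma$ being strictly less than one; bounding this spectral radius and extracting the explicit sufficient condition $0 < \mu < 2\lambda_{\min}(\mU_{\F}^H{\rm diag}(\bp)\mU_{\F})/\lambda_{\max}^2(\mU_{\F}^H{\rm diag}(\bp)\mU_{\F})$ is the technical heart of the argument. I expect this step to be the main obstacle: one must carefully separate the "mean" part of the transition operator from the zero-mean fluctuation part, bound the contribution of the fluctuations (which involves fourth-order interplay between sampling randomness and the rows of $\mU_{\F}$), and show that for $\mu$ in the stated range the combined operator is contractive. A convenient route is to pass to eigen-coordinates of $\mU_{\F}^H{\rm diag}(\bp)\mU_{\F}$ and bound $\bGamma$ entrywise, or to use the standard energy-conservation / weighted-norm argument from adaptive filtering (as in \cite{sayed2011adaptive}).

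Finally, under the small step-size assumption I would linearize in $\mu$: the steady-state covariance $\bK_\infty$ solves $\bK_\infty = \bGamma\bK_\infty + \mu^2\bY$, and to first order in $\mu$ the fluctuation terms in $\bGamma$ are $O(\mu^2)$, so $(\mI - \bGamma)\bK_\infty \approx \mu\,(\mathcal{L} + \mathcal{R})\,\mathrm{vec}(\bK_\infty)$ with $\mathcal{L}$ the Lyapunov operator $\bX \mapsto \mathcal{M}\bX + \bX\mathcal{M}$, $\mathcal{M} = \mU_{\F}^H{\rm diag}(\bp)\mU_{\F}$. Solving the resulting Lyapunov equation $\mathcal{M}\bK_\infty + \bK_\infty\mathcal{M} = \mu\,\mU_{\F}^H{\rm diag}(\bp)\mR_v{\rm diag}(\bp)\mU_{\F} + O(\mu^2)$ and taking the trace — noting that ${\rm MSE} = \lim_n \mathbb{E}\|\widetilde{\bx}[n]\|^2 = \lim_n {\rm Tr}(\bK[n])$ since $\mU_{\F}$ has orthonormal columns — gives $\tfrac{\mu}{2}{\rm Tr}\big[(\mU_{\F}^H{\rm diag}(\bp)\mU_{\F})^{-1}\mU_{\F}^H{\rm diag}(\bp)\mR_v\mU_{\F}\big]$ as claimed in (\ref{MSD}) (using $d_i^2 = d_i$ to collapse ${\rm diag}(\bp)\mR_v{\rm diag}(\bp)$ appropriately within the trace). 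The convergence rate (\ref{learning_rate}) follows by reading off the slowest mode of the mean recursion: the contraction factor is $\max_i |1 - \mu\lambda_i(\mathcal{M})|$, which for small $\mu$ equals $1 - 2\mu\lambda_{\min}(\mathcal{M})$ after accounting for the squared dynamics of the covariance.
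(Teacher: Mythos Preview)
The paper does not actually supply a proof of this theorem: it states the result as a summary and defers to the cited reference \cite{di2017adaptive} (``The main results are summarized in the following Theorem \cite{di2017adaptive}''). So there is no in-paper proof to compare against.

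That said, your proposal follows exactly the standard methodology one expects for such results in the adaptive-filtering literature (and, in particular, the one used in the cited reference): pass to the GFT error $\widetilde{\bs}[n]$, obtain the affine recursion with random coefficient matrix $\mI - \mu\,\mU_{\F}^H\mD_{\S[n]}\mU_{\F}$, use independence and $d_i^2=d_i$ to compute first and second moments, vectorize the covariance recursion, bound the spectral radius of the resulting transition operator to get the stated sufficient step-size range, and finally use the small-$\mu$ Lyapunov approximation to extract (\ref{MSD}) and (\ref{learning_rate}). Your identification of the main technical obstacle --- controlling the fluctuation terms in the covariance transition operator to produce the explicit bound $\mu < 2\lambda_{\min}/\lambda_{\max}^2$ --- is accurate; this is typically handled via the energy-conservation/weighted-norm machinery you mention from \cite{sayed2011adaptive}, together with norm inequalities that replace the full second-moment operator by its mean part plus an $O(\mu^2)$ correction. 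One small clarification: the factor $2$ in the convergence rate (\ref{learning_rate}) arises because $\alpha$ is defined through the decay of the \emph{mean-square} error (covariance), whose transition operator is, to first order in $\mu$, $\mI \otimes (\mI-\mu\mathcal{M}) + (\mI-\mu\mathcal{M})\otimes \mI - \mI\otimes\mI$, with slowest eigenvalue $1-2\mu\lambda_{\min}(\mathcal{M})$; your last sentence gestures at this but could be stated more explicitly.
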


The results of Theorem 5 are instrumental to devise optimal probabilistic sampling strategies for Algorithm 2, which are described in the sequel.


\subsection{Probabilistic Sampling Strategies}

We consider a sampling design that seeks for the probability vector $\bp$ that minimizes the total sampling rate over the graph, i.e., $\mathbf{1}^T\bp$, while guaranteeing a target performance in terms of MSE in (\ref{MSD}) and of convergence rate in (\ref{learning_rate}) \cite{di2017adaptive}.
Then, the optimization problem can be cast as:
\begin{equation}\label{min_sampling_rate_problem}
\setlength{\jot}{6pt}
\begin{aligned}
&\hspace{-.3cm}\min_{\boldsymbol{p}} \;\; \mathbf{1}^T\bp  \\
&\hbox{s.t.} \;\;\;\,\lambda_{\min}\left(\mU_{\F}^H\,{\rm diag}(\bp)\mU_{\F}\right)\geq \displaystyle\frac{1-\bar{\alpha}}{2\mu}, \\
&\;\;\;\;\;\;\;\;{\rm Tr}\left[\left(\mU_{\F}^H\,{\rm diag}(\bp)\mU_{\F}\right)^{-1}\mU_{\F}^H\,{\rm diag}(\bp)\mR_v\mU_{\F}\right]\leq\frac{2\gamma}{\mu},\\
&\;\;\;\;\;\;\;\;\mathbf{0}\leq \bp \leq \bp^{\max}.
\end{aligned}
\end{equation}
The first constraint imposes that the convergence rate of the algorithm is larger than a desired value, i.e., $\alpha$ in (\ref{learning_rate}) is smaller than a target value, say, e.g., $\bar{\alpha}\in(0,1)$. Furthermore, as illustrated in \cite{di2017adaptive}, the first constraint on the convergence rate also guarantees adaptive signal reconstruction, i.e., condition (\ref{|EDcB|<1}) holds true. The second constraint guarantees a target mean-square performance, i.e., the MSE in (\ref{MSD}) must be less than or equal to a prescribed value, say, e.g., $\gamma>0$. Finally, the last constraint limits the probability vector to lie in the box $p_i\in[0,p^{\max}_i]$, for all $i$, with $0\leq p^{\max}_i\leq 1$ denoting an upper bound on the sampling probability at each node that might depend on external factors such as, e.g., limited energy, processing, and/or communication resources, node or communication failures, etc.

Unfortunately, problem (\ref{min_sampling_rate_problem}) is non-convex, due to the presence of the non-convex constraint on the MSE. To handle the non-convexity of (\ref{min_sampling_rate_problem}), we exploit an upper bound of the MSE function in (\ref{MSD}), given by:
\begin{equation}\label{MSD_bound}
{\rm MSE}(\bp)\,\leq\,\overline{{\rm MSE}}(\bp)\triangleq \frac{\mu}{2}\,\frac{{\rm Tr}\left(\mU_{\F}^H\,{\rm diag}(\bp)\mR_v\mU_{\F}\right)}{\lambda_{\min}\left(\mU_{\F}^H\,{\rm diag}(\bp)\mU_{\F}\right)}, \quad \hbox{for all $\bp\in \mathbb{R}^N$.}
\end{equation}
Of course, replacing the MSE function with the bound (\ref{MSD_bound}), the second constraint in (\ref{min_sampling_rate_problem}) is always satisfied. Furthermore, the function in (\ref{MSD_bound}) has the nice property to be pseudo-convex, since it is the ratio between a convex and a concave function, which are both differentiable and positive for all $\bp$ satisfying the other constraints \cite{avriel2010generalized}. Thus, exploiting the upper bound (\ref{MSD_bound}), we can formulate a surrogate optimization problem for the selection of the probability vector $\bp$, which can be cast as:
\begin{equation}\label{min_sampling_rate_problem2}
\setlength{\jot}{6pt}
\begin{aligned}
&\hspace{-.8cm}\min_{\boldsymbol{p}} \;\; \mathbf{1}^T\bp  \\
&\hspace{-.4cm}\hbox{subject to} \\
&\;\lambda_{\min}\left(\mU_{\F}^H\,{\rm diag}(\bp)\mU_{\F}\right)\geq \displaystyle\frac{1-\bar{\alpha}}{2\mu}, \\
&\;\frac{{\rm Tr}\left(\mU_{\F}^H\,{\rm diag}(\bp)\mC_v\mU_{\F}\right)}{\lambda_{\min}\left(\mU_{\F}^H\,{\rm diag}(\bp)\mU_{\F}\right)}\leq\frac{2\gamma}{\mu},\\
&\;\mathbf{0}\leq \bp \leq \bp^{\max}.
\end{aligned}
\end{equation}
Since the sublevel sets of pseudo-convex functions are convex sets \cite{avriel2010generalized}, it is straightforward to see that the approximated problem (\ref{min_sampling_rate_problem2}) is convex, and its global solution can be found using efficient numerical tools \cite{boyd2004convex}.

\subsection{Distributed Adaptive Recovery}

The implementation of Algorithm 2 would require to collect all the data $\{y_i[n]\}_{i:d_i[n]=1}$, for all $n$, in a single processing unit that performs the computation. In many practical systems, data are collected in a distributed network, and sharing local information with a central processor might be either unfeasible or not efficient, owing to the large volume of data, time-varying network topology, and/or privacy issues. Motivated by these observations, in this section we extend the LMS strategy in Algorithm 2 to a distributed setting, where the nodes perform the reconstruction task via online in-network processing, only exchanging data between neighbors defined over a sparse (but connected) communication network, which is described by the graph ${\cal G}_c=(\mathcal{V},\mathcal{E}_c)$. Proceeding as in \cite{di2016distAdaGraph} to derive distributed solution methods for problem (\ref{LMS_problem}), let us introduce local copies $\{\bs_i\}_{i=1}^N$ of the global variable $\bs$, and recast problem (\ref{LMS_problem}) in the following equivalent form:
\begin{align}\label{diffusion_LMS_problem}
&\min_{\{\boldsymbol{s}_i\}_{i=1}^N} \;\; \sum_{i=1}^N \;\mathbb{E} \left|d_i[n]\left(y_i[n]-\bu_{\F,i}^H\bs_i\right)\right|^2\\
&\qquad \hbox{subject to} \quad \bs_i=\bs_j \quad \hbox{for all $i=1,\ldots,N, \;\; j\in\mathcal{N}_i$,} \nonumber
\end{align}
where $\bu_{\F,i}^H$ is the $i$-th row of matrix $\mU_{\F}$ (supposed to be known at node $i$, or computable in distributed fashion, see, e.g., \cite{di2014distributed}), and $\mathcal{N}_i=\{j|a_{ij}>0\}$ is the local neighborhood of node $i$. To solve problem (\ref{LMS_problem}), we consider an Adapt-Then-Combine (ATC) diffusion strategy \cite{di2016distAdaGraph}, and the resulting algorithm is reported in Algorithm 3. The first step in (\ref{ATC diffusion}) is an adaptation step, where the intermediate estimate $\boldsymbol{\psi}_{i}[n]$ is updated adopting the current observation taken by node $i$, i.e. $y_i[n]$.  The second step is a diffusion step where the intermediate estimates $\boldsymbol{\psi}_{j}[n]$, from the (extended) spatial neighborhood $\overline{\mathcal{N}}_i=\mathcal{N}_i\bigcup \{i\}$, are combined through the weighting coefficients $\{w_{ij}\}$. Several possible combination rules have been proposed in the literature, such as the Laplacian or the Metropolis-Hastings weights, see, e.g. \cite{Barb-Sard-Dilo}, \cite{xiao2007distributed}, \cite{Cattivelli-Sayed}. Finally, given the estimate $\bs_{i}[n]$ of the GFT at node $i$ and time $n$, the last step produces the estimate $x_i[n+1]$ of the graph signal value at node $i$ [cf. (\ref{lin_observation})].
\begin{algorithm}[t]
\vspace{.1cm}
Start with random $\bs_{i}[0]$, for all $i\in \V$. Given combination weights $\{w_{ij}\}_{i,j}$, step-sizes $\mu_i>0$, for each time $n\geq0$ and for each node $i$, repeat:
\begin{align}\label{ATC diffusion}
&\boldsymbol{\psi}_{i}[n]=\bs_{i}[n]+\mu_i d_i[n] \bu_{\F,i}(y_i[n]-\bu_{\F,i}^H\bs_i[n]) \hspace{.4cm} \hbox{(adaptation)}\nonumber\\
&\bs_i[n+1]=\sum_{j\in \overline{\mathcal{N}}_i} w_{ij} \boldsymbol{\psi}_j[n] \hspace{3.6cm} \hbox{(diffusion)}\\
&x_i[n+1]=\bu_{\F,i}^H\bs_i[n+1]  \hspace{3.2cm} \hbox{(reconstruction)}\nonumber
\end{align}
\caption{\textbf{: Diffusion LMS on Graphs}}
\end{algorithm}
Here, we assume that graphs ${\cal G}$ (i.e., the one used for GSP) and ${\cal G}_c$ (i.e., the one describing the communication pattern among nodes) might have in general distinct topologies. We remark that both graphs play an important role in the proposed distributed processing strategy (\ref{ATC diffusion}). First, the processing graph determines the structure of the regression data $\bu_{\F,i}$ used in the adaptation step of (\ref{ATC diffusion}). In fact, $\{\bu_{\F,i}^H\}_i$ are the rows of the matrix $\mU_{\F}$, whose columns are the eigenvectors of the Laplacian matrix associated with the set of support frequencies $\F$. Then, the topology of the communication graph determines how information is spread all over the network through the diffusion step in (\ref{ATC diffusion}). This illustrates how, when reconstructing graph signals in a distributed manner, we have to take into account both the processing and communication aspects of the problem.

\subsection{Numerical Results}

In this section, we first illustrate the performance of the probabilistic sampling method in (\ref{min_sampling_rate_problem2}) over the IEEE 118 bus graph. Then, we consider an application to dynamic inference of brain activity.\\

\begin{figure}[t]
 \centering
 {\includegraphics[width=7cm]{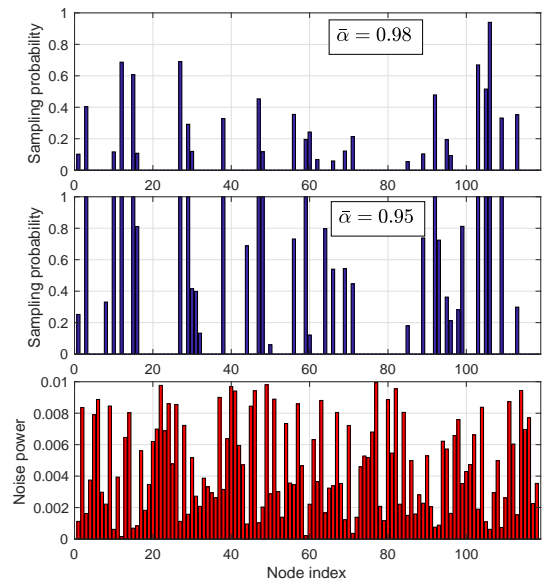}}
 \caption{Optimal probabilistic sampling over the IEEE118 graph.}\label{fig:Prob}
\end{figure}

\noindent \textbf{Optimal probabilistic sampling.} As a first example, let us consider an application to the IEEE 118 Bus Test Case in Fig. \ref{Nets}(a). The spectral content of the graph signal is assumed to be limited to the first ten eigenvectors of the Laplacian matrix of the graph. The observation noise in (\ref{lin_observation}) is zero-mean, Gaussian, with a diagonal covariance matrix $\mR_v$, where each element is illustrated in Fig. \ref{fig:Prob} (bottom). The other parameters are: $\mu=0.1$, and $\gamma=10^{-3}$. Then, in Fig. \ref{fig:Prob} (top and middle), we plot the optimal probability vector obtained solving (\ref{min_sampling_rate_problem2}), for two different values of $\bar{\alpha}$. In all cases, the constraints on the MSE and convergence rate are attained strictly. From Fig. \ref{fig:Prob} (top and middle), we notice how the method increases the sampling rate if we require a faster convergence (i.e., a smaller value of $\bar{\alpha}$); it also finds a very sparse probability vector and usually avoids to assign large sampling probabilities to nodes having large noise variances. Interestingly, with the proposed formulation, sparse sampling patterns are obtained thanks to the optimization of the sampling probabilities, which are already real numbers, without resorting to any relaxation of complex integer optimization problems [cf. (\ref{Conv_relax})].\\

\noindent \textbf{Inference of brain activity.} The last example  presents  test  results  on  Electrocorticography (ECoG) data,  captured through  experiments  conducted  in  an epilepsy study \cite{kramer2008emergent}. Data were collected over a  period  of  five days, where the electrodes recorded 76 ECoG time series, consisting of voltage levels measured in different regions of the brain. Two  temporal intervals of interest were picked for analysis, namely, the preictal and ictal intervals. In the sequel, we focus on the ictal interval. Further  details  about  data  acquisition and  pre-processing are provided in \cite{kramer2008emergent}. The GFT matrix $\mU_{\F}$ is learnt from the first 200 samples of ictal data, using the method proposed in \cite{gavish2017optimal}, and imposing a bandwidth equal to $|\F|=30$. In Fig. \ref{fig:track_brain}, we illustrate the true behavior of the ECoG present at an unobserved electrode chosen at random, over the first 400 samples of ictal data, along with estimate carried out using Algorithm 2 (with $\mu=1.5$). The sampling set is fixed over time (i.e., $p_i=1$ for all $i$), and chosen according to the E-optimal design in (\ref{D-design}), selecting $32$ samples. As we can notice from Fig. \ref{fig:track_brain}, the method is capable to efficiently infer and track the unknown dynamics of ECoG data at unobserved regions of the brain.

\begin{figure}[t]
\centering
\includegraphics[width=7cm,height=5cm]{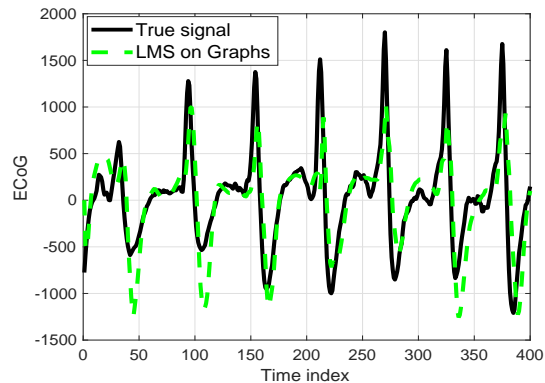}
\caption{True ECoG and estimate across time.}
\label{fig:track_brain}
\end{figure}

\section{Conclusions}

In this chapter, we have reviewed some of the methods recently proposed to sample and interpolate signals defined over graphs. First, we have recalled the conditions for perfect recovery under a bandlimited assumption. Second, we have illustrated sampling strategies, based on greedy methods or convex relaxations, aimed at reducing the effect of noise or aliasing on the recovered signal. Then, we considered $\ell_1$-norm reconstruction, which allows perfect recovery of graph signals in the presence of a strong impulsive noise over a limited number of nodes. Finally, adaptive methods based on (possibly distributed) LMS strategies were illustrated to enable tracking and recovery of time-varying signals over graphs. Several interesting problems need further investigation, e.g., sampling and recovery in the presence of directed/switching topologies, sampling adaptation in time-varying scenarios, distributed implementations, and the extension of GSP methods to incorporate multi-way relationships among data, e.g., under the form of hypergraphs or simplicial complexes.

\section*{REFERENCES}
\bibliographystyle{Vancouver-Numbered-Style_3_}
\bibliography{refs}

\end{document}